\documentclass[10pt]{IEEEtran}

\usepackage{amsmath}
\usepackage{mathtools}
\usepackage{graphicx}
\usepackage{enumerate}
\usepackage{enumitem}
\usepackage{url}
\usepackage{subfigure}
\usepackage{algorithm}
\usepackage{algpseudocode}
\usepackage{stfloats}
\usepackage{lineno}
\usepackage{hyperref}
\usepackage{bm}
\usepackage{bbm}
\usepackage{amssymb}
\usepackage{xcolor}
\usepackage{float}
\usepackage{cite}
\usepackage{tabularx}
\usepackage{tabu}
\usepackage{multicol}
\usepackage{multirow}
\usepackage{colortbl,booktabs,threeparttable}
\usepackage{dcolumn}
\usepackage{flushend}
\usepackage{soul}
\usepackage{ragged2e}
\usepackage{relsize}

\graphicspath{{Figures/}}


\newcommand{\rscl}[1]{\mathrm{#1}}  
\renewcommand{\vec}[1]{\bm #1}
\newcommand{\rvec}[1]{\mathbf{#1}}
\newcommand{\mat}[1]{\bm #1}

\renewcommand{\math}[1]{\hat{\bm #1}}

\newcommand{\vecb}[1]{\bar{\bm #1}}
\newcommand{\rvecb}[1]{\bar{\mathbf{#1}}}



\renewcommand{\cal}[1]{\mathcal{#1}}


\newcommand{\R}{\mathbb{R}}
\newcommand{\C}{\mathbb{C}}
\newcommand{\E}{\mathbb{E}}

\renewcommand{\d}{\mathrm{d}}
\renewcommand{\th}{\text{th}}

\newcommand{\T}{\mathsf{T}}
\renewcommand{\H}{\mathsf{H}}
\renewcommand{\st}{\text{s.t.}}
\newcommand{\round}[1]{\langle{#1}\rangle}

\newcommand{\opreal}{\operatorname{real}}
\newcommand{\opimag}{\operatorname{imag}}
\newcommand{\KL}{\operatorname{KL}}

\DeclareMathOperator*{\argmin}{argmin}
\DeclareMathOperator*{\fuse}{fuse}



\newcommand{\defeq}{\coloneqq}
\newcommand{\stp}{\hfill $\square$}     
\newcommand{\captext}[1]{\texorpdfstring{#1}{}} 


\newcommand{\quotemark}[1]{``#1”}

\definecolor{hl-bg-color}{RGB}{255,255,215}
\sethlcolor{hl-bg-color} 
\definecolor{new-magenta}{RGB}{255,0,255}
\soulregister\cite7
\soulregister\citep7
\soulregister\citet7
\soulregister\ref7
\soulregister\eqref7
\newcommand*{\HIGHLIGHT}{}
\ifdefined\HIGHLIGHT

\else

\fi


\newcommand*{\IEEE}{}
\ifdefined\IEEE
    \newtheorem{theorem}{{Theorem}}
    \newtheorem{proposition}{{Proposition}}
    
    \newtheorem{lemma}{{Lemma}}
    \newtheorem{method}{{Method}}

    \newtheorem{example}{{Example}}
    
    \newtheorem{myrule}{{Rule}}

\fi

\begin{document}
\newpage
\title{
A New Particle Filter for Target Tracking in MIMO OFDM Integrated Sensing and Communications
}

\author{Shixiong Wang,~
        Wei Dai,
        and Geoffrey Ye Li,~\IEEEmembership{Fellow,~IEEE}
\thanks{S. Wang, W. Dai, and G. Y. Li are with the Department of Electrical and Electronic Engineering, Imperial College London, London SW7 2AZ,
United Kingdom (E-mails: s.wang@u.nus.edu; wei.dai1@imperial.ac.uk; geoffrey.li@imperial.ac.uk). 
(\textit{Corresponding Author: S. Wang.})
}
}

\maketitle

\begin{abstract}
Particle filtering for target tracking using multi-input multi-output (MIMO) pulse-Doppler radars faces three long-standing obstacles: a) the absence of reliable likelihood models for raw radar data; b) the computational and statistical complications that arise when nuisance parameters (e.g., complex path gains) are augmented into state vectors; and c) the prohibitive computational burden of extracting noisy measurements of range, Doppler, and angles from snapshots. Motivated by an optimization-centric interpretation of Bayes' rule, this article addresses these challenges by proposing a new particle filtering framework that evaluates each hypothesized state using a tailored cost function, rather than relying on an explicit likelihood relation. The framework yields substantial reductions in both running time and tracking error compared to existing schemes. 
In addition, we examine the implementation of the proposed particle filter in MIMO orthogonal frequency-division multiplexing (OFDM) systems, aiming to equip modern communication infrastructure with integrated sensing and communications (ISAC) capabilities. Experiments suggest that MIMO-OFDM with pulse-Doppler processing holds considerable promise for ISAC, particularly when wide bandwidth, extended on-target time, and large antenna aperture are utilized.
\end{abstract}

\begin{keywords}
Particle Filter, Gibbs Posterior, Target Tracking, MIMO, OFDM, ISAC, Pulsed Radar, Continuous-Wave Radar, Coordinated Multi-Point, Ambiguity Function, Entropy Method.
\end{keywords}

\section{Introduction} \label{sec:introdction}
\IEEEPARstart{I}{n} emerging wireless systems, integrated sensing and communications (ISAC) has been envisioned as a promising solution for enabling both communication and sensing functions on a common hardware platform, leveraging shared resources in power, time, space, and frequency, such as transmit and receive antennas, radio frequency chains, digital signal processing units, and transmit waveforms \cite{mishra2019toward,liu2023seventy}. This integration can reduce hardware volume, improve resource efficiency, and provide mutual benefits between sensing and communications \cite{liu2020radar,dong2025communication}. Among many sensing aspects (e.g., detection, measuring, tracking, and imaging \cite{richards2022fundamentals}), target tracking is fundamental and plays a critical role in a broad range of ISAC applications, such as sensing-assisted communications \cite{liu2020radar} and intelligent transportation \cite{zhang2025sensing}. This article is concerned with target tracking problems in ISAC. Particularly, the multi-input multi-output orthogonal frequency-division multiplexing (MIMO-OFDM) systems are investigated for three reasons. 
First, MIMO technology enables spatial and waveform diversity to enhance communication performance in terms of throughput and reliability \cite{lu2014overview,li1999channel,li2002mimo}, as well as radar sensing performance in terms of angle resolution and spatial interference suppression \cite{stoica2007probing,xu2008target,cui2013mimo}. 
Second, OFDM techniques allow frequency-domain signal processing using fast Fourier transform (FFT) if the duration of the cyclic prefix is sufficient to cover all round-trip target delays \cite[p.~19]{koivunen2024multicarrier}, \cite{zhang2024input}, which brings significant computational benefits to real-world radar operations \cite{sturm2011waveform}, \cite{xiao2024novel}, \cite[Eqs.~(62)-(63)]{keskin2023monostatic}, especially compared to time-domain integration-based matched filtering for ranging \cite{mercier2020comparison,san2007mimo}. 
Third, modern mature wireless communication systems are built upon MIMO-OFDM, such as 4G LTE, 5G NR, Wi-Fi 5/6/6E/7, and even upcoming 6G schemes; the adoption of MIMO-OFDM in 6G has been confirmed by the 3GPP RAN1 Committee in August 2025. Therefore, it is natural to equip these widespread systems with sensing capabilities without requiring large-scale infrastructure modification and software upgradation \cite{sturm2011waveform,koivunen2024multicarrier}. Research on MIMO ISAC, OFDM ISAC, and MIMO-OFDM ISAC is flourishing. Remarkable progress in this and related areas includes the following: 1) information-theoretic analyses of systematic and methodological performances \cite{xiong2023fundamental,olson2023coverage,ahmadipour2022information,liu2025cpofdm}; 2) signal processing methods in target detection and parameter estimation \cite{berger2010signal,zheng2017super,wu2022super,xiao2024novel}, waveform design \cite{hu2022low,zhang2024input,li2024mimo}, beamforming and precoding \cite{liu2021cramer,lu2024random}, and pulse shaping and modulation \cite{du2024reshaping,liu2025uncovering}; 3) and practical developments of real-world systems \cite{xu2022experimental}. Target tracking for MIMO-OFDM ISAC systems is also reported, although sparsely, such as \cite{gu2023beam}, where only angles are tracked.

The target tracking problem for a single object can be formulated as \cite{li2003survey,herbert2017mmse}
\begin{equation}\label{eq:target-trcking}
    \left\{
        \begin{array}{cl}
           \rvec x_k  &=  \vec f_k(\rvec x_{k-1}, \rvec w_{k-1}) \\
           \rvec y_k  &=  \vec h_k(\rvec x_k, \rvec v_k),
        \end{array}
    \right.
\end{equation}
where $k = 0, 1, 2, \ldots$ denotes the discrete time index, $\rvec x_k \in \R^{d_x}$ the target state vector, $\rvec y_k \in \R^{d_y}$ (or $\rvec y_k \in \C^{d_y}$) the measurement vector, $\vec f_k(\cdot, \cdot)$ the state transition function, $\rvec w_k  \in \R^{d_w}$ the process noise vector, $\vec h_k(\cdot, \cdot)$ the state measurement function, and $\rvec v_k  \in \R^{d_v}$ (or $\rvec v_k  \in \C^{d_v}$) the measurement noise vector. Mathematically, a target tracking problem is stated as using the system model \eqref{eq:target-trcking} and the measurement history 
$ \cal Y_{k} \defeq \{\rvec y_0, \rvec y_1, \ldots, \rvec y_k\}$, up to and including time $k$, to infer the unknown states $\rvec x_k$ of the target, using nonlinear state estimation approaches, such as particle filtering \cite{arulampalam2002tutorial}, \cite[Chap.~15]{simon2006optimal}, \cite{godsill2019particle}, which constitutes the technical focus of this article. The state vector $\rvec x_k$, as per specific applications, can consist of different variables. On the one hand, in Cartesian coordinates, $\rvec x_k$ can denote positions, velocities, and accelerations of the target of interest (ToI) in two- or three-dimensional spaces \cite{li2003survey}. On the other hand, in polar coordinates, $\rvec x_k$ can include range, radial speed (or Doppler frequency shift), angles (e.g., elevation, azimuth), and angular speeds of the ToI in the radar's view field \cite{ratpunpairoj2015particle,huleihel2013optimal}. At times, statistical signal processing methods mandate $\rvec x_k$ to contain some nuisance parameters, for example, the turning rate in a coordinated-turn maneuvering model \cite{li2003survey}, the model index in a multi-model tracking system \cite{wang2023distributionally}, or the real and imaginary components of complex path gains \cite{huleihel2013optimal,ratpunpairoj2015particle,gu2023beam,herbert2017mmse}. 
According to the level of radar signal processing, the following two cases for the measurement vector $\rvec y_k$ are standard:
\begin{itemize}
    \item \textit{Information-Level Tracking}: In this scheme, $\rvec y_k \in \R^{d_y}$ denotes the radar's measurements of range, radial velocity, and angle(s) that are to be assimilated by the target's motion mechanism \eqref{eq:target-trcking}; see, e.g., \cite[Chap.~7]{richards2022fundamentals}. At times, $\rvec y_k \in \R^{d_y}$ can also denote the unfiltered positions and velocities of the ToI, calculated using raw range-Doppler-angle measurements \cite{li2001survey}, \cite[p.~8]{stone2013bayesian}. This is the most common way in real-world radar target tracking due to its simplicity in operation. When detection is jointly considered, this paradigm is widely referred to as \textit{detect-then-track} in the literature \cite{davey2013snr}.

    \item \textit{Signal-Level Tracking}: In this scheme, $\rvec y_k \in \C^{d_y}$ denotes the radar's base-band raw in-phase/quadrature (I/Q) snapshots \cite{ratpunpairoj2015particle,huleihel2013optimal,herbert2017mmse,herbert2018computationally}. Note that, within a target tracking period $k$, a radar can collect several snapshots in fast and slow times; see \cite[Sec.~1.5.1]{richards2022fundamentals}. When detection is jointly considered, this paradigm is known as \textit{track-before-detect}, which outperforms detect-then-track in the low signal-to-noise ratio (SNR) regimes \cite{davey2007comparison,davey2013track}.
\end{itemize}
However, three drawbacks of the existing schemes for radar target tracking have to be outlined:
\begin{itemize}
    \item In information-level tracking, at each time $k$, obtaining measurements $\vec y_k$ and calibrating their uncertainty quantification parameters (i.e., distribution type, mean, covariance, and other statistics of $\rvec v_k$) are computationally prohibitive, especially when advanced optimization techniques, such as compressed sensing and tensor decomposition, are utilized; see \cite{zheng2017super,wu2022super}, \cite[Table~II]{zhang2021overview}. Even for matched-filter-based methods that are considered computationally efficient \cite{xiao2024novel}, searching for the peaks of the filter outputs in a validation gate is time-consuming as well. The computational burden further increases when particle filtering is afterward employed to link these measurements $\vec y_k$ with the tracking model \eqref{eq:target-trcking}.

    \item In signal-level tracking, when raw snapshots $\vec y_k$ are obtained, the next key step is to calculate the likelihoods $q(\vec y_k | \vec x_k)$ of hypothesized states $\vec x_k$; note that the likelihood function $q(\vec y_k | \vec x_k)$ is induced by the measurement model $\rvec y_k  =  \vec h_k(\rvec x_k, \rvec v_k)$ in \eqref{eq:target-trcking}. The challenge is that the measurement model $\rvec y_k  =  \vec h_k(\rvec x_k, \rvec v_k)$ is unknown (or uncertain) in practice, so is the probabilistic model $q(\vec y_k | \vec x_k)$, because the information about environmental scatterers (including targets of no interest) and channel noises is lacking. Assuming inexact distribution families or statistics for $q(\vec y_k | \vec x_k)$ cannot offer trustworthy likelihood information.
    
    \item To handle unknown parameters involved in the state transition and measurement mechanisms, we usually need to include the nuisance parameters into the state vector $\rvec x_k$, e.g., the turning rate of the coordinated turn model \cite{li2003survey}, the real and imaginary components of the complex path gain due to the ToI \cite{ratpunpairoj2015particle,huleihel2013optimal}. However, this operation imposes nontrivial complications because the statistical and computational efficiency of particle filters significantly deteriorate when the dimension $d_x$ of the state vector $\rvec x_k$ increases \cite{doucet2001sequential,snyder2008obstacles}. Even though the Rao-Blackwellized (i.e., marginalized) particle filtering \cite{schon2005marginalized,cappe2007overview} has the potential to mitigate such complications, it remains inadequate for tracking rapidly jumping nuisance parameters if an accurate state-transition model for these parameters is not available. For example, complex path gains can change abruptly across consecutive tracking periods $k$ for fast-fading channels, so the usual first-order Gauss–Markov models (e.g., random walk and constant-velocity models) are not appropriate; see the experiment section for a specific discussion.
\end{itemize}
To address the above issues of radar target tracking, this article proposes a new particle filtering framework for signal-level target tracking (PF-SLTR), by leveraging the optimization-centric interpretation of Bayes' rule, i.e., the Gibbs posterior \cite{germain2016pac,bissiri2016general,knoblauch2022optimization}. The advantages of this new framework are threefold:
\begin{itemize}
    \item We do not require an operationally separate, and computationally expensive, phase to obtain raw radar measurements (e.g., range, Doppler, angles) and calibrate their uncertainty quantification.

    \item We do not mandate the knowledge of the probabilistic model $q(\vec y_k | \vec x_k)$ to calculate $\vec y_k$-data evidence (i.e., likelihoods) of particles (i.e., hypothesized state values).

    \item We do not augment the parameters of the complex path gain, due to the ToI, into the state vector.
\end{itemize}
Technically, the pivotal yet challenging step is to devise an efficient cost function $h(\vec y_k, \vec x_k)$ to quantify the $\vec y_k$-data evidence of a hypothesized state $\vec x_k$, which requires problem-specific investigation for signal-level target tracking using MIMO radar. Existing general principles and methodologies in applied statistics and machine learning are not directly applicable because the design of $h$ is domain-specific and there is no one-size-fits-all solution; cf., e.g., \cite{germain2016pac,bissiri2016general,baek2023generalized,knoblauch2022optimization,martin2022direct}.

Moreover, we adapt the proposed PF-SLTR to MIMO-OFDM ISAC systems, where MIMO-OFDM communication waveforms are reused for high-accuracy target tracking. Radar signal processing for target tracking in both the time domain and the frequency domain is investigated, corresponding to the pulsed radar scheme \cite{lellouch2014processing} and the continuous-wave radar scheme \cite{koivunen2024multicarrier,zhang2024input}, respectively. Two highlights are as follows:
\begin{itemize}
    \item The pulsed scheme is appropriate for far and fast targets, and the maximum identifiable range is determined by the duration of the pulse repetition interval. In this case, MIMO-OFDM communication signals are treated as ordinary radar waveforms, and usual pulse-Doppler processing is applied; see, e.g., \cite[Chaps.~4-5]{richards2022fundamentals}. In contrast, the continuous-wave (CW) scheme is tailored for near and slow targets, and the maximum identifiable range is limited by the duration of the cyclic prefix (CP) \cite{mishra2019toward,koivunen2024multicarrier}. Therefore, the pulsed case is \textit{sensing-centric}, where the use of radar silent times (i.e., small duty cycle) compromises the communication throughput but increases the ranging capacity, while the CW case is \textit{communications-centric}, where the duration of CP limits the ranging capacity but the usual MIMO-OFDM communications can remain unaffected.

    \item In the pulsed paradigm, the circular convolution property of the FFT processing cannot be guaranteed. Hence, matched filtering for ranging must be carried out in the time domain using time-correlation methods \cite{mercier2020comparison}, although algorithmically, the computational complexity of time correlation can be reduced by zero-padding and FFT tricks; see \cite[p.~678]{richards2022fundamentals}. Contrarily, in the CW paradigm, the circular convolution property of the FFT processing is ensured because the CP duration can cover all path delays. As a result, matched filtering for ranging can be directly performed in the frequency domain using FFT \cite{sturm2011waveform}, \cite{xiao2024novel}, \cite[Eqs.~(62)-(63)]{keskin2023monostatic}. Note that matched filtering in the CW case requires the removal of CPs \cite{zhang2024input}, while that in the pulsed case mandates the preservation of CPs \cite{lellouch2014processing}.
\end{itemize}

For clarity of presentation in developing the new particle filter, and without loss of generality, this article focuses exclusively on pure target tracking and does not consider the detection problem. The extension to joint detection and tracking is technically straightforward; one can just replace the canonical particle filters with the proposed PF-SLTR. Additionally, the multi-target joint-tracking problem is deferred to future work, as it demands a substantially larger research effort. For extensive and comparative readings on these two directions, see, e.g., \cite{davey2007comparison,davey2013track,kwon2019particle,vermaak2005monte,ito2020multi}.

\textit{Primary Contributions}: In summary, the contributions of this article can be outlined as follows: 1) A new particle filtering framework for signal-level target tracking is studied, which frees practitioners from obtaining raw range-Doppler-angle measurements, specifying the exact likelihood model, and augmenting path gain parameters into the state vector; 2) The usage of the proposed PF-SLTR in MIMO-OFDM ISAC systems is investigated, by designing waveform structures and signal processing chains.

\textit{Supplementary Contribution}: To extend the applicability of PF-SLTR to coordinated multi-point systems and achieve performance gains from cross-station synergy, we further develop a fusion strategy for particle-represented posterior distributions that are generated by different ISAC stations; see Appendix \ref{subsec:multi-point} 
for better readability, as well as due to the page limit.

\textit{Notations}: Let $\R^d$ and $\C^d$ denote the $d$-dimensional spaces of real and complex numbers, respectively. We use lowercase letters for vectors (e.g., $\vec x$) and uppercase ones for matrices (e.g., $\mat X$), both in boldface; scalars are not boldfaced. Random quantities are written in Roman fonts (e.g., $\rvec x$, $\rvec X$), while deterministic ones are in italic fonts (e.g., $\vec x$, $\vec X$). Let $\round{x}$ denote the rounding operation of real number $x$ to the nearest integer, and $[N] \defeq \{0, 1, 2, \ldots, N-1\}$ the running index set starting from zero for integer $N$. Let $\mat I_d$ denote the $d$-dimensional identity matrix, and $\mat X^\T$ and $\mat X^\H$ the transpose and conjugate transpose of matrix $\mat X$, respectively.

\section{System Model and Problem Statements}\label{sec:radar-model}
\subsection{System Model of Signal-Level Target Tracking}
Since this article focuses on the reuse of wireless communication signals for sensing, in this section, we review the MIMO multi-carrier pulse-Doppler radar model that is suitable for various transmit waveforms. For conciseness of presentation and without loss of generality, this article specifically investigates a two-dimensional planar target tracking problem so that only one direction-of-arrival (DoA) angle for each target is to be examined. We consider a mono-static ISAC station that is
\begin{itemize}
    \item Located at the origin $[0, 0]$ and staying unmoving.
    
    \item Operating with center frequency $f_c$, wavelength $\lambda$, bandwidth $B$, sampling time period $T_s = 1/B$, per-antenna transmit power budget $P_t$.

    \item Equipped with a transmit antenna array that includes $N_t$ collocated omnidirectional elements; the transmit steering vector is $\vec a(\theta)$ in direction $\theta$, which is determined by the array geometry.

    \item Equipped with a receive antenna array that includes $N_r$ collocated omnidirectional elements; the receive steering vector is $\vec b(\theta)$ in direction $\theta$. The receive array and transmit array can operate in either the time-division duplex mode (cf. pulsed radars) or the full-duplex mode (cf. continuous-wave radars that employ pulse-Doppler processing).

    \item Configured with target tracking period $T_t$, i.e., the time interval between $k-1$ and $k$ in  \eqref{eq:target-trcking}.
\end{itemize}
For waveform structures in terms of sensing, we assume that
\begin{itemize}
    \item Each target tracking period $T_t$ includes one coherent processing interval (CPI) $T_i$, within which the motion parameters of the ToI remain unchanged and can be resolved. Note that $T_i \le T_t$.
    \item Each CPI constitutes $N_p$ pulses (i.e., $N_p$ slow times) and each pulse has a time duration $T_p$. Hence, the number of snapshots within a pulse is $L_{\text{ss}} = \round{T_p/T_s}$. 
    \item The pulse repetition interval (PRI) is $T_r$, and therefore, the pulse repetition frequency (PRF) is $f_r = 1/T_r$. Note that $T_p \le T_r \le T_i \le T_t$.
    \item Each PRI consists of $L$ sampling points, so $T_r = L T_s$. The set $[L]$ defines $L$ range bins; note that $L_{\text{ss}} \le L$.
\end{itemize}
The above pulse-Doppler radar configurations imply that
\begin{itemize}
    \item We have the CPI $T_i = N_p T_r$.
    \item The resolutions of round-trip delay and Doppler frequency are $1/B$ and $1/T_i$, respectively.
    \item The inter-pulse silent time is $T_r - T_p$.
    \item The maximum detectable round-trip delay is $\tau_{\text{max-d}} = T_r - T_p$, and therefore, the maximum radar detection distance is $R_{\text{max-d}} = C\tau_{\text{max-d}}/2$; $C$ is the light speed. The minimum detectable round-trip delay is $\tau_{\text{min-d}} = T_p$ if the time-division duplex mode is adopted, or $\tau_{\text{min-d}} = 0$ if the full-duplex mode is employed; $R_{\text{min-d}} \defeq C\tau_{\text{min-d}}/2$.
    \item The minimum and maximum unambiguous Doppler frequency shifts are $\nu_{\text{min-ua}} = -f_r/2$ and $\nu_{\text{max-ua}} = f_r/2$, respectively.
\end{itemize}
In addition, suppose that the ToI within the tracking time period $k$ is characterized by 
\begin{itemize}
    \item The true round-trip delay $\tau_0$, and hence, the true range $R_0 = C\tau_0/2$. (Note that $R_{\text{min-d}} \le R_0 \le R_{\text{max-d}}$.)
    \item The true Doppler frequency shift $\nu_0$, and therefore, the true radial speed $v_0 = \nu_0 \lambda/2$. (Note that $\nu_{\text{min-ua}} \le \nu_0 \le \nu_{\text{max-ua}}$.)
    \item The true DoA $\theta_0$ relative to the array boresight. (The transmit array and receive array are supposed to have the same DoA $\theta_0$ for the ToI.)
\end{itemize}

With the above system settings, in a given target tracking period $k$, the sensing mechanism of narrowband MIMO radars\footnote{When a wireless system is narrowband in terms of radar sensing, it may still be treated as broadband in terms of communications; for determination criteria, see \cite[Eq.~(3)]{san2007mimo}, \cite[p.~439]{heath2016overview}.} can be modeled as \cite[Eq.~(11)]{san2007mimo}, \cite[p.~439]{heath2016overview}, \cite{zhang2021overview},
\begin{equation}\label{eq:general-radar-measurement}
    \begin{array}{cl}
        \rvec y(t) &= \beta_0 \vec b(\theta_0) \vec a^\H(\theta_0) \rvec s(t - \tau_0) e^{j 2 \pi \nu_0 t} + \\
        & \quad \quad \displaystyle \sum^{N_s}_{i = 1} \beta_{i,0} \vec b(\theta_{i,0}) \vec a^\H(\theta_{i,0}) \rvec s(t - \tau_{i,0}) e^{j2 \pi \nu_{i,0} t} +  \rvec n(t),
    \end{array}
\end{equation}
where $t$, $0 \le t \le T_i \le T_t$, denotes the radar's logical time of signal processing in a CPI; $\beta_0$ is the path gain of the ToI; $N_s$ is the total number of scatterers (including, e.g., targets of no interest); $\beta_{i,0}$, $\theta_{i, 0}$, $\tau_{i, 0}$, and $\nu_{i,0}$ are the parameters of the $i^\th$ scatterer; $\rvec s(t) \in \C^{N_t}$ denotes the transmitted waveform at time $t$, $\rvec y(t) \in \C^{N_r}$ the received signal, and $\rvec n(t) \in \C^{N_r}$ the zero-mean circularly-symmetric complex Gaussian channel noise vector. For a target tracking problem aiming to resolve target parameters $(\tau_0, \nu_0, \theta_0)$, \eqref{eq:general-radar-measurement} can be compactly written as
\begin{equation}\label{eq:general-radar-measurement-compact}
    \begin{array}{cl}
        \rvec y(t) &= \beta_0 \vec b(\theta_0) \vec a^\H(\theta_0) \rvec s(t - \tau_0) e^{j 2 \pi \nu_0 t} + \rvec n_c(t),
    \end{array}
\end{equation}
where $\rvec n_c(t) \in \C^{N_r}$ denotes the channel interference-plus-noise (IPN) vector, which is not necessarily Gaussian. Supposing that the Doppler frequency shift is piece-wise constant across different PRIs, \eqref{eq:general-radar-measurement-compact} can be sampled at $t = pT_r + lT_s$, leading to the discrete-time representation 
\begin{equation}\label{eq:general-radar-measurement-compact-discret}
    \begin{array}{cl}
        \rvec y_p(l) &= \beta_0 \vec b(\theta_0) \vec a^\H(\theta_0) \rvec s_p(l - \tau_0) e^{j 2 \pi \nu_0 p T_r} + \rvec n_{c,p}(l),
    \end{array}
\end{equation}
where $p \in [N_p]$ is the slow-time Doppler-bin index and $l \in [L]$ is the fast-time range-bin index; $\rvec s_p(l)$ is the sampled transmitted waveform at $p^\th$ PRI and $l^\th$ range bin, $\rvec y_p(l)$ the sampled received signal, and $\rvec n_{c,p}(l)$ the sampled channel IPN vector. To avoid notational clutter, $\rvec s_p(l - \tau_0)$ is shorthand for $\rvec s_p(l - \round{\tau_0/T_s})$. 

At times, the frequency-domain representation of \eqref{eq:general-radar-measurement-compact-discret} would be computationally useful. Suppose that we have $N_{\text{cr}}$ FFT points for ranging. Let the frequency-bin width be $\Delta \defeq B/N_{\text{cr}}$. Then, the frequency-domain representation of \eqref{eq:general-radar-measurement-compact-discret} is
\begin{equation}\label{eq:general-radar-measurement-compact-freq}
    \begin{array}{cl}
        \rvecb y_p(n) &= \beta_0 \vec b(\theta_0) \vec a^\H(\theta_0) \rvecb s_p(n) e^{-j 2 \pi n \Delta \tau_0} e^{j 2 \pi p T_r \nu_0} + \\ 
        & \quad \quad \quad \quad \rvecb n_{c,p}(n),
    \end{array}
\end{equation}
where $n \in [N_{\text{cr}}]$ denotes the frequency-bin index, $\rvecb y_p$ the FFT of $\rvec y_p$, $\rvecb s_p$ the FFT of $\rvec s_{p}$, and $\rvecb n_{c,p}$ the FFT of $\rvec n_{c,p}$; the center-frequency-dependent phase term $e^{-j 2 \pi f_c \tau_0}$ has been absorbed by $\beta_0$. In matched filtering for ranging, it is common that $N_{\text{cr}} = L$. However, the computational complexity in the time domain is $\cal O(L^2)$, while that in the frequency domain is $\cal O(L \log_2 (L))$ due to the FFT technique.

In signal-level target tracking (SLTR), \eqref{eq:general-radar-measurement-compact-discret} or \eqref{eq:general-radar-measurement-compact-freq} defines the state measurement equation in the tracking model \eqref{eq:target-trcking}. To be specific, $\rvec y_k$ is a data set constructed as
\begin{equation}\label{eq:measurements}
    \rvec y_k \defeq 
    \begin{cases}
        \{\rvec y_p(l)\}_{\forall p, l}, &\text{if time-domain processing}, \\
        \{\rvecb y_p(n)\}_{\forall p, n}, &\text{if frequency-domain processing},
    \end{cases}
\end{equation}
where the notational dependence of the right-hand-side terms on the tracking period index $k$ is omitted. 


Under the Bayesian statistical inference paradigm, the target tracking problem using the system model \eqref{eq:target-trcking}, in a specific time $k$, can be stated as finding the posterior distribution
\begin{equation}\label{eq:bayesian-tracking}
    q(\vec x_k | \cal Y_k) \propto q(\vec y_k | \vec x_k) q(\vec x_k | \cal Y_{k-1})
\end{equation}
of the unknown state vector $\rvec x_k$ by utilizing the likelihood function $\vec x_k \mapsto q(\vec y_k | \vec x_k)$ and the prior distribution $q(\vec x_k | \cal Y_{k-1})$; at every $k$, the unknown true state vector $\vec x_{0,k}$ is one-to-one corresponded to the true target parameters $(\tau_{0,k}, \nu_{0,k}, \theta_{0,k})$. It is preferable if $q(\vec x_k | \cal Y_k)$ can concentrate around the true state $\vec x_{0,k}$. The prior distribution $q(\vec x_k | \cal Y_{k-1})$ is induced by the state transition equation, while the likelihood function $q(\vec y_k | \vec x_k)$ is defined by the state measurement equation. In this section, for notational simplicity, we suppress the dependence on the time index $k$ and the measurement history $\cal Y_{k-1}$. Eq. \eqref{eq:bayesian-tracking} is therefore shorthanded as 
\begin{equation}\label{eq:bayesian-tracking-lite}
    q(\vec x | \vec y) \propto q(\vec y | \vec x) q(\vec x).
\end{equation}
Focusing on the target parameters, \eqref{eq:bayesian-tracking-lite} particularizes into 
\begin{equation}\label{eq:bayesian-tracking-lite-polar}
    q(\vec \varphi | \vec y) \propto q(\vec y | \vec \varphi) q(\vec \varphi),
\end{equation}
where $\vec \varphi \defeq (\tau, \nu, \theta) \in \R^3$ and the likelihood function $\vec \varphi \mapsto q(\vec y | \vec \varphi)$ is induced by either \eqref{eq:general-radar-measurement-compact-discret} or \eqref{eq:general-radar-measurement-compact-freq}; cf. \eqref{eq:measurements}.

\subsection{Problem Statements (Research Gaps in SLTR)}
In applying the Bayes' rule \eqref{eq:bayesian-tracking-lite-polar} to signal-level target tracking, the real-world operational dilemma is twofold:
\begin{itemize}
    \item The explicit and exact mathematical expression of the likelihood function $\vec \varphi \mapsto q(\vec y | \vec \varphi)$ is difficult to obtain because the probabilistic models of $\beta_0$, $\rvec n_{c, p}(l)$, and $\rvecb n_{c, p}(n)$ are challenging to specify in practice. Even if the probabilistic form of $q(\vec y | \vec \varphi)$ is (assumed to be) known, for example, Gaussian or exponential family, its statistical information, such as mean and covariance, is still to be meticulously calibrated by leveraging the domain knowledge and historical data. The complication arises if $q(\vec y | \vec \varphi)$ is time-varying across different tracking periods due to the significant maneuvers of the ToI and scatterers and to the change of channel conditions.

    \item To simplify the modeling of $q(\vec y | \vec \varphi)$, some statistical inference methods, such as conventional particle filters, treat $\beta_0$ as a nuisance parameter to be jointly estimated from the collected data \cite{ratpunpairoj2015particle,huleihel2013optimal}. This scheme leads to the augmented parameter vector $\vec \varphi \defeq (\opreal(\beta), \opimag(\beta), \tau, \nu, \theta) \in \R^5$ of the ToI. However, although convenient for modeling, this augmentation significantly introduces additional computational burdens to statistical inference because the larger the dimension of the state vector $\rvec x_k$ is, the lower the computational and sample efficiency of any statistical inference method would be. This is specifically the case for particle filters where more particles, and therefore more computational resources, must be employed to guarantee a satisfactory accuracy for $(\tau_0, \nu_0, \theta_0) \in \R^3$ \cite{doucet2001sequential,snyder2008obstacles}. Although the Rao-Blackwellized particle filtering can be used to reduce such computational burdens \cite{schon2005marginalized}, it is inadequate if $\beta_0$ abruptly varies across consecutive tracking periods $k$: Note that to track an unknown parameter, it should remain almost constant or slowly vary over time $k$ \cite[p.~397]{simon2006optimal}.
\end{itemize}

To address the above two issues, this article revolutionizes the existing particle filtering framework for signal-level target tracking and proposes a new PF-SLTR scheme.

\section{A New Particle Filter for Signal-Level Target Tracking Using MIMO Radar}\label{sec:particle-filter}

\subsection{Optimization-Centric Interpretation of Bayes' Rule \captext{\eqref{eq:bayesian-tracking-lite-polar}}}
In the statistical machine learning and general statistics communities, a well-established optimization-centric interpretation of the Bayes' Rule \eqref{eq:bayesian-tracking-lite-polar} is as follows:
\begin{equation}\label{eq:bayes-rule-opt-view}
    q(\vec \varphi | \vec y) = \displaystyle \argmin_{u(\vec \varphi)} \E_{\vec \varphi \sim u(\vec \varphi)} [-\ln{q(\vec y | \vec \varphi)}] + \KL[{u(\vec \varphi) \| q(\vec \varphi)}],
\end{equation}
where $u(\vec \varphi)$ is a variable distribution of $\vec \varphi$ and $\KL[{u(\vec \varphi) \| q(\vec \varphi)}]$ defines the Kullback--Leibler divergence of $u(\vec \varphi)$ from $q(\vec \varphi)$. Eq. \eqref{eq:bayes-rule-opt-view} indicates that the posterior distribution $q(\vec \varphi | \vec y)$ of $\vec \varphi$, given $\vec y$, \textit{maximizes} the expected log-likelihood evaluated at $\vec y$, and simultaneously, \textit{minimizes} the dissimilarity from the prior distribution $q(\vec \varphi)$. This interpretation aligns well with our intuition: the posterior distribution should mostly match the \quotemark{data evidence} at hand (i.e., maximum likelihood), and at the same time, mostly match the \quotemark{prior knowledge} in the brain (i.e., minimum deviation from the prior).

Eq. \eqref{eq:bayes-rule-opt-view} can be generalized to the following formulation
\begin{equation}\label{eq:bayes-rule-opt-view-h}
    q(\vec \varphi | \vec y) = \displaystyle \argmin_{u(\vec \varphi)} \E_{\vec \varphi \sim u(\vec \varphi)} \xi h(\vec y, \vec \varphi) + \KL[{u(\vec \varphi) \| q(\vec \varphi)}],
\end{equation}
where $h(\vec y, \vec \varphi)$ is a general cost function that encodes the $\vec y$-data evidence of $\vec \varphi$, beyond the well-adopted negative log-likelihood cost function $-\ln q(\vec y | \vec \varphi)$, and $\xi \ge 0$ is a tradeoff parameter that balances between $\vec y$-data evidence and $q(\vec \varphi)$-prior information. The generalization from \eqref{eq:bayes-rule-opt-view} to \eqref{eq:bayes-rule-opt-view-h} is rooted in the following two motivations:
\begin{itemize}
    \item Different applications prefer different utility metrics, so we can manipulate how the available data $\vec y$ supports the utility of the solution (or decision) $\vec \varphi$. The basic principle to design $h$ is that, if $\vec \varphi_1$ is favored over $\vec \varphi_2$ given $\vec y$, we should have $h(\vec y, \vec \varphi_1) \le h(\vec y, \vec \varphi_2)$. Illustrating instances in machine learning applications can be seen in \cite{germain2016pac}.

    \item In practice, the likelihood function $q(\vec y | \vec \varphi)$ may not be exactly specified, so we can steer the cost function $h$ to achieve robustness against this likelihood-model mismatch. Examples in this direction can be seen in \cite{knoblauch2022optimization}.
\end{itemize}

A desired property of \eqref{eq:bayes-rule-opt-view-h} is that it admits a closed-form solution for every cost function $h(\vec y, \vec \varphi)$ and every prior distribution $q(\vec \varphi)$, which is beneficial in real-time signal processing, such as online target tracking.

\begin{lemma}[see \cite{bissiri2016general}]\label{lemma:bayes-rule-h}
    The generalized posterior distribution $q(\vec \varphi | \vec y)$ solving \eqref{eq:bayes-rule-opt-view-h} is given by
    \begin{equation}\label{eq:Gibbs}
        q(\vec \varphi | \vec y) \propto q(\vec \varphi) e^{- \xi h(\vec y, \vec \varphi)},
    \end{equation}
     if the normalizer of the right-hand-side term exists. \stp
\end{lemma}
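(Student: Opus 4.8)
The plan is to avoid calculus of variations entirely and instead recognize the objective in \eqref{eq:bayes-rule-opt-view-h} as a single Kullback--Leibler divergence plus a constant, so that the minimization becomes immediate. First I would introduce the candidate minimizer $p^\star(\vec \varphi) \defeq q(\vec \varphi)\, e^{-\xi h(\vec y, \vec \varphi)} / Z$, where $Z \defeq \int q(\vec \varphi)\, e^{-\xi h(\vec y, \vec \varphi)} \, \d \vec \varphi$ is precisely the normalizer whose existence (together with positivity and finiteness, i.e. $Z \in (0, \infty)$) is the lone hypothesis of the lemma. This assumption is exactly what makes $p^\star$ a legitimate probability density, and it is the only place where the hypothesis is invoked.

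Next I would take logarithms of the defining relation to obtain $\xi h(\vec y, \vec \varphi) = \ln q(\vec \varphi) - \ln p^\star(\vec \varphi) - \ln Z$ and substitute this into the expected-cost term $\E_{\vec \varphi \sim u} \xi h(\vec y, \vec \varphi)$ of the objective. The key bookkeeping step is that, after the substitution, the two appearances of $\ln q(\vec \varphi)$ (one from the substituted cost, one inside $\KL[u \| q]$) cancel, the term $-\ln Z$ factors out as a constant because $u$ integrates to one, and what remains is exactly $\KL[u(\vec \varphi) \| p^\star(\vec \varphi)] - \ln Z$. In other words, the entire two-term functional equals a single divergence of $u$ from the claimed Gibbs form, offset by the $u$-independent constant $-\ln Z$. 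This is the ``completing-the-square'' manoeuvre that makes the calculus-of-variations route unnecessary.

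Finally I would invoke Gibbs' inequality, $\KL[u \| p^\star] \ge 0$ with equality if and only if $u = p^\star$ almost everywhere (a one-line consequence of Jensen's inequality applied to the strictly convex map $t \mapsto -\ln t$). It follows that the objective is minimized uniquely at $u = p^\star$, with optimal value $-\ln Z$, and that $p^\star$ is by construction the Gibbs posterior \eqref{eq:Gibbs}. This simultaneously confirms that the minimization over the abstract family of distributions $u(\vec \varphi)$ is well posed and that the infimum is attained.

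I expect the main obstacle to be measure-theoretic hygiene rather than any hard estimate: one must restrict the feasible set to distributions $u$ that are absolutely continuous with respect to $q$ (any $u$ violating this makes $\KL[u \| q] = +\infty$ and is trivially non-optimal), and one must justify substituting inside the integral, together with the finiteness of $\E_{\vec \varphi \sim u} \xi h(\vec y, \vec \varphi)$, so that splitting the objective into a single KL term is legitimate. All of these are controlled once $Z \in (0, \infty)$, so the argument remains short and self-contained under the stated hypothesis.
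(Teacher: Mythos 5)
Your argument is correct: rewriting the objective of \eqref{eq:bayes-rule-opt-view-h} as $\KL[u \| p^\star] - \ln Z$ with $p^\star \propto q(\vec \varphi) e^{-\xi h(\vec y, \vec \varphi)}$ and invoking Gibbs' inequality is exactly the standard derivation of \eqref{eq:Gibbs}, and it is the same route taken in the cited reference \cite{bissiri2016general}, to which the paper defers rather than supplying its own proof. Your closing remarks on restricting to $u \ll q$ and on $Z \in (0,\infty)$ correctly identify the only hypotheses needed, so nothing is missing.
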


Eq. \eqref{eq:Gibbs} is called the \textit{Gibbs posterior} in the literature. Focusing back on the Bayesian filtering for signal-level target tracking (e.g., PF-SLTR), since the exact likelihood function $\vec \varphi \mapsto q(\vec y | \vec \varphi)$ is unavailable, we are motivated to construct an appropriate cost function $h(\vec y, \vec \varphi)$ to solve the two issues stated before, that is, without relying on the exact knowledge of the likelihood function $\vec \varphi \mapsto q(\vec y | \vec \varphi)$ and without augmenting nuisance parameters $(\opreal(\beta), \opimag(\beta))$ into $\vec \varphi$.

\subsection{A New Framework of PF-SLTR}

Particle filtering, which utilizes particle representations of probability density functions, is a computational method of Bayesian filtering \cite{godsill2019particle,arulampalam2002tutorial}. To be specific, in particle filtering, the prior distribution $q(\vec \varphi)$ is represented by a set of discrete particles $\{\vec \varphi_0, \vec \varphi_1, \ldots, \vec \varphi_{N_{\text{par}}-1}\}$:
\begin{equation}\label{eq:par-representation-prior}
    q(\vec \varphi) = \sum^{N_{\text{par}} - 1}_{i = 0} \eta_i \delta(\vec \varphi - \vec \varphi_i),
\end{equation}
where $\eta_i \in [0, 1]$ is the prior weight associated with the particle $\vec \varphi_i$, $\sum^{N_{\text{par}} - 1}_{i = 0} \eta_i = 1$, and $\delta(\cdot)$ is the Dirac delta distribution. Then, under the Gibbs rule \eqref{eq:Gibbs}, the posterior distribution $q(\vec \varphi | \vec y)$ given $\vec y$ can be represented as
\begin{equation}\label{eq:par-representation-posterior}
    q(\vec \varphi | \vec y) = \sum^{N_{\text{par}} - 1}_{i = 0} u_i \delta(\vec \varphi - \vec \varphi_i),
\end{equation}
where $u_i \in [0, 1]$ is the posterior weight associated with the particle $\vec \varphi_i$, $\sum^{N_{\text{par}} - 1}_{i = 0} u_i = 1$, and for every $i \in [N_{\text{par}}]$, we have
\begin{equation}\label{eq:par-representation-posterior-weight}
    u_i \propto \eta_i e^{- \xi h(\vec y, \vec \varphi_i)}.
\end{equation}

\begin{example}\label{eq:usual-particle-filter}
If the likelihood function can be exactly specified, i.e., $h(\vec y, \vec \varphi) \defeq -\ln q(\vec y | \vec \varphi)$, by setting $\xi \defeq 1$, \eqref{eq:par-representation-posterior-weight} particularizes to 
\begin{equation}\label{eq:par-representation-posterior-weight-usual}
    u_i \propto \eta_i \cdot q(\vec y | \vec \varphi_i) = \eta_i \cdot e^{-[-\ln q(\vec y | \vec \varphi_i)]},
\end{equation}
which is the usual particle filter. \stp
\end{example}

Eqs. \eqref{eq:par-representation-prior}-\eqref{eq:par-representation-posterior-weight} form the foundation of our new particle filtering framework for signal-level target tracking. The enabling yet challenging step is to design a cost function $h(\vec y, \vec \varphi)$ without relying on the true likelihood function $q(\vec y | \vec \varphi)$. In addition, this cost function satisfies the following rule.

\begin{myrule}\label{rule:cost-func-h}
At data $\vec y$, it holds that
\begin{equation}\label{eq:rule-1}
    h(\vec y, \vec \varphi_1) \le h(\vec y, \vec \varphi_2),
\end{equation}
or equivalently,
\begin{equation}\label{eq:rule-2}
e^{-h(\vec y, \vec \varphi_1)} \ge e^{-h(\vec y, \vec \varphi_2)},
\end{equation}
if $\vec \varphi_1$ is preferred over $\vec \varphi_2$. \stp
\end{myrule}

In line with Rule \ref{rule:cost-func-h}, this article suggests four concrete cost functions for PF-SLTR, which are conceptually simple and operationally manageable for MIMO radars. Other possibilities exist; however, they are to be identified by future research. 

\subsubsection{Cost Functions in PF-SLTR}
The first two cost functions are motivated by the least-squares estimation.

\begin{method}\label{method:cost-func-time-domain}
If we work with the time-domain signals in \eqref{eq:general-radar-measurement-compact-discret}, we define
    \begin{equation}\label{eq:cost-func-time-domain}
        \begin{array}{l}
            h_1(\vec y, \vec \varphi) \defeq \\ 
            \displaystyle \min_{\gamma \in \C} \sum^{N_p - 1}_{p = 0} \sum^{L - 1}_{l = 0} \Big\|\vec y_p(l) - \gamma \vec b(\theta) \vec a^\H(\theta) \vec s_p(l - \tau) e^{j 2 \pi p T_r \nu} \Big\|^2_2,
        \end{array}
    \end{equation}
where $\vec y$ is a realized sample of the radar received signal $\rvec y_k$ in \eqref{eq:measurements}, $\vec \varphi \defeq (\tau, \nu, \theta) \in \R^3$ is a hypothesized ToI-parameter vector, and $\vec y_p(l)$ is a realized sample of $\rvec y_p(l)$ in \eqref{eq:general-radar-measurement-compact-discret}. \stp
\end{method}

In $h_1(\vec y, \vec \varphi)$, the nuisance parameter $\gamma$ is marginalized out through the least-squares principle (i.e., the minimization operation). Different from Rao-Blackwellization, this operation can serve as an alternative approach to dimensionality reduction in particle filtering. The cost function $h_1(\vec y, \vec \varphi)$ can be computationally simplified.

\begin{proposition}\label{prop:h1}
If the transmit array has the empirical spatial correlation $\math R_s$, then $h_1(\vec y, \vec \varphi)$ equals
\begin{equation}\label{eq:h1-simplified}
    \begin{array}{l}
    h_1(\vec y, \vec \varphi) = 
    \displaystyle \sum^{N_p - 1}_{p = 0} \sum^{L - 1}_{l = 0} \vec y^\H_p(l) \vec y_p(l) \quad - \\ 
            \quad \frac{
            \displaystyle
            \left| \sum^{N_p - 1}_{p = 0} \sum^{L - 1}_{l = 0} \vec y^\H_p(l) \vec b(\theta) \vec a^\H(\theta) \vec s_p(l - \tau) e^{j 2 \pi p T_r \nu} \right|^2
            }{\displaystyle N_r N_p L_{\text{ss}} \vec a^\H(\theta) \math R_s \vec a(\theta)},
    \end{array}
\end{equation}
where $L_{\text{ss}}$ denotes the number of snapshots contained in the pulse $\{\vec s_p(l)\}_{\forall l \in [L_{\text{ss}}]}$, for every $p \in [N_p]$. When $\math R_s = P_t \mat I_{N_t}$, the denominator becomes $P_t N_t N_r N_p L_{\text{ss}}$, which is $\vec \varphi$-independent.
\end{proposition}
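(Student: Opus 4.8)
The plan is to recognize $h_1(\vec y, \vec \varphi)$ as a complex linear least-squares problem in the single scalar unknown $\gamma \in \C$, solve that minimization in closed form, and then evaluate the three scalar sums that appear. First I would abbreviate the hypothesized signal template by $\vec m_p(l) \defeq \vec b(\theta) \vec a^\H(\theta) \vec s_p(l - \tau) e^{j 2 \pi p T_r \nu}$, so that the objective in \eqref{eq:cost-func-time-domain} reads $\sum_{p,l} \|\vec y_p(l) - \gamma \vec m_p(l)\|_2^2$. Expanding each squared norm and summing yields a scalar quadratic $A - \gamma B - \gamma^\ast B^\ast + |\gamma|^2 C$, where $A \defeq \sum_{p,l} \vec y_p^\H(l)\vec y_p(l)$, $B \defeq \sum_{p,l} \vec y_p^\H(l)\vec m_p(l)$, and $C \defeq \sum_{p,l}\vec m_p^\H(l)\vec m_p(l)$; here the unit-modulus Doppler phase $e^{j 2 \pi p T_r \nu}$ is retained in $B$ and drops out of $C$.

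Next I would minimize over $\gamma$ by setting the Wirtinger derivative with respect to $\gamma^\ast$ to zero (equivalently, completing the square), giving the minimizer $\gamma_{\mathrm{opt}} = B^\ast/C$ and the minimum value $A - |B|^2/C$. Substituting back the definitions, $A$ reproduces the first term $\sum_{p,l}\vec y_p^\H(l)\vec y_p(l)$ verbatim, and $|B|^2$ reproduces the squared-modulus numerator in \eqref{eq:h1-simplified}. What remains is to show that $C$ equals the stated denominator $N_r N_p L_{\text{ss}}\,\vec a^\H(\theta) \math R_s \vec a(\theta)$.

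Evaluating $C$ is the main obstacle, and I expect it to hinge on two structural facts. Writing $\vec m_p^\H(l)\vec m_p(l) = |\vec a^\H(\theta)\vec s_p(l-\tau)|^2\,\|\vec b(\theta)\|_2^2$ and using the standard unit-modulus steering convention $\|\vec b(\theta)\|_2^2 = N_r$, I can factor out $\vec a(\theta)$ to obtain $C = N_r\,\vec a^\H(\theta)\big[\sum_{p,l}\vec s_p(l-\tau)\vec s_p^\H(l-\tau)\big]\vec a(\theta)$. The delicate step is that the inner sum of outer products must be shown to be independent of the delay $\tau$: because each pulse occupies only $L_{\text{ss}}$ range bins within the $L$-bin PRI and the range constraints $R_{\text{min-d}} \le R_0 \le R_{\text{max-d}}$ keep the shifted support inside the window, the shift merely relocates the nonzero snapshots and preserves $\sum_{l=0}^{L-1}\vec s_p(l-\tau)\vec s_p^\H(l-\tau) = \sum_{l=0}^{L_{\text{ss}}-1}\vec s_p(l)\vec s_p^\H(l)$. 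Summing over the $N_p$ pulses and invoking the definition of the empirical spatial correlation $\math R_s \defeq (N_p L_{\text{ss}})^{-1}\sum_{p}\sum_{l=0}^{L_{\text{ss}}-1}\vec s_p(l)\vec s_p^\H(l)$ then gives $C = N_r N_p L_{\text{ss}}\,\vec a^\H(\theta)\math R_s\vec a(\theta)$, completing the identity. Finally, specializing to $\math R_s = P_t \mat I_{N_t}$ and using $\vec a^\H(\theta)\vec a(\theta) = N_t$ collapses the denominator to $P_t N_t N_r N_p L_{\text{ss}}$, which no longer depends on $\theta$ (nor on $\tau,\nu$), so it can be dropped from the per-particle comparison in \eqref{eq:par-representation-posterior-weight}.
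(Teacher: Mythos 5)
Your proposal is correct and follows essentially the same route as the paper's proof in Appendix~\ref{append:h1}: solve the scalar complex least-squares problem in $\gamma$ in closed form to get $A - |B|^2/C$, then evaluate the denominator $C$ using $\|\vec b(\theta)\|_2^2 = N_r$, the fact that the delayed pulse support stays within the $L$-bin window so the sum of outer products reduces to $\sum_{l=0}^{L_{\text{ss}}-1}\vec s_p(l)\vec s_p^\H(l)$, and the definition of the empirical correlation $\math R_s$. If anything, your explicit justification of the $\tau$-independence of that inner sum is slightly more careful than the paper's one-line remark.
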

\begin{proof}
See Appendix \ref{append:h1}. 
\stp
\end{proof}

\begin{method}\label{method:cost-func-freq-domain}
If we work with the frequency-domain signals in \eqref{eq:general-radar-measurement-compact-freq}, we define
    \begin{equation}\label{eq:cost-func-freq-domain}
        \begin{array}{cl}
            h_2(\vec y, \vec \varphi) &\defeq \displaystyle \min_{\gamma \in \C} \sum^{N_p - 1}_{p = 0} \sum^{N_{\text{cr}} - 1}_{n = 0} \Big\|\vecb y_p(n) - \Big.\\ 
            & \quad \quad \Big. \gamma \vec b(\theta) \vec a^\H(\theta) \vecb s_p(n) e^{-j 2 \pi n \Delta \tau} e^{j 2 \pi p T_r \nu} \Big\|^2_2,
        \end{array}
    \end{equation}
where $\vec y$ is a realized sample of the radar received signal $\rvec y_k$ in \eqref{eq:measurements}, $\vec \varphi \defeq (\tau, \nu, \theta) \in \R^3$ is a hypothesized ToI-parameter vector, and $\vecb y_p(n)$ is a realized sample of $\rvecb y_p(n)$ in \eqref{eq:general-radar-measurement-compact-freq}. \stp
\end{method}

Due to the minimization operation over $\gamma$, by employing $h_2(\vec y, \vec \varphi)$, dimensionality reduction in particle filtering is implicitly conducted; cf. Rao-Blackwellization. The cost function $h_2(\vec y, \vec \varphi)$ can be computationally simplified as follows.

\begin{proposition}\label{prop:h2}
If the transmit array has the empirical spatial correlation $\math R_s$, then $h_2(\vec y, \vec \varphi)$ equals
\begin{equation}\label{eq:h2-simplified}
    \begin{array}{l}
    h_2(\vec y, \vec \varphi) = 
    \displaystyle \sum^{N_p - 1}_{p = 0} \sum^{N_{\text{cr}} - 1}_{n = 0} \vecb y^\H_p(n) \vecb y_p(n) \quad - \\ 
            \quad \frac{
            \displaystyle
            \left| \sum^{N_p - 1}_{p = 0} \sum^{N_{\text{cr}} - 1}_{n = 0} \vecb y^\H_p(n) \vec b(\theta) \vec a^\H(\theta) \vecb s_p(n) e^{-j 2 \pi n \Delta \tau} e^{j 2 \pi p T_r \nu} \right|^2
            }{\displaystyle N_r N_p N_{\text{cr}} \vec a^\H(\theta) \math R_s \vec a(\theta)},
    \end{array}
\end{equation}
where $N_{\text{cr}}$ denotes the number of frequency bins (i.e., FFT points) to represent $\{\vecb s_p(n)\}_{\forall n \in [N_{\text{cr}}]}$, for every $p \in [N_p]$. 
When $\math R_s = P_t \mat I_{N_t}$, the denominator in the second line becomes $P_t N_t N_r N_p N_{\text{cr}}$, which is $\vec \varphi$-independent.
\end{proposition}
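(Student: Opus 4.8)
The plan is to reuse, almost verbatim, the least-squares argument that underlies Proposition~\ref{prop:h1} (Appendix~\ref{append:h1}), with the range-bin index $l \in [L]$ replaced by the frequency-bin index $n \in [N_{\text{cr}}]$ and the delay contribution carried by the phase $e^{-j 2 \pi n \Delta \tau}$. Abbreviating the per-bin model vector as $\vec m_{p,n} \defeq \vec b(\theta) \vec a^\H(\theta) \vecb s_p(n) e^{-j 2 \pi n \Delta \tau} e^{j 2 \pi p T_r \nu}$, the objective in \eqref{eq:cost-func-freq-domain} is $\sum_{p,n} \| \vecb y_p(n) - \gamma \vec m_{p,n} \|_2^2$, a quadratic in the single complex scalar $\gamma$. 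First I would expand it as $J(\gamma) = A - \gamma c - \gamma^* c^* + |\gamma|^2 D$, with the aggregate quantities $A \defeq \sum_{p,n} \vecb y_p^\H(n) \vecb y_p(n)$, $c \defeq \sum_{p,n} \vecb y_p^\H(n) \vec m_{p,n}$, and $D \defeq \sum_{p,n} \vec m_{p,n}^\H \vec m_{p,n}$.

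Next I would carry out the scalar minimization over $\gamma \in \C$. Since $D > 0$, completing the square (equivalently, setting the Wirtinger derivative $\partial J / \partial \gamma^*$ to zero, which yields the optimizer $\gamma_\star = c^*/D$) gives the minimum value $J(\gamma_\star) = A - |c|^2 / D$. This already matches the structure of \eqref{eq:h2-simplified}: the term $A$ is the received-energy term on the first line, and $|c|^2$ is exactly the squared modulus appearing in the numerator of the subtracted term.

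It then remains to show that $D$ equals the stated denominator. I would compute $\vec m_{p,n}^\H \vec m_{p,n}$ and note that the two phase factors have unit modulus and cancel, while $\vec b^\H(\theta) \vec b(\theta) = \| \vec b(\theta) \|_2^2 = N_r$, leaving $\vec m_{p,n}^\H \vec m_{p,n} = N_r \vec a^\H(\theta) \vecb s_p(n) \vecb s_p^\H(n) \vec a(\theta)$. Summing over $p$ and $n$ and pulling $\vec a(\theta)$ outside yields $D = N_r \vec a^\H(\theta) \big( \sum_{p,n} \vecb s_p(n) \vecb s_p^\H(n) \big) \vec a(\theta) = N_r N_p N_{\text{cr}} \vec a^\H(\theta) \math R_s \vec a(\theta)$, where the last equality invokes the definition of the empirical spatial correlation $\math R_s \defeq (N_p N_{\text{cr}})^{-1} \sum_{p,n} \vecb s_p(n) \vecb s_p^\H(n)$. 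The isotropic special case follows by substituting $\math R_s = P_t \mat I_{N_t}$ and using $\vec a^\H(\theta) \vec a(\theta) = N_t$, so that the denominator collapses to $P_t N_t N_r N_p N_{\text{cr}}$, which no longer depends on $\vec \varphi$.

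The computation is entirely routine; the only place demanding care is the counting factor in $D$. In the time-domain companion (Proposition~\ref{prop:h1}) only the $L_{\text{ss}}$ in-pulse snapshots contribute to the model energy, so the normalization there is $N_p L_{\text{ss}}$; here, by contrast, every one of the $N_{\text{cr}}$ frequency bins generally carries signal energy, so the aggregate $\sum_{p,n} \vecb s_p(n) \vecb s_p^\H(n)$ is normalized by $N_p N_{\text{cr}}$ instead. Keeping this averaging convention for $\math R_s$ consistent with the frequency-domain summation is the single subtlety; everything else transfers directly from the proof of Proposition~\ref{prop:h1}.
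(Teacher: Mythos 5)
Your proposal is correct and follows essentially the same route as the paper: the paper proves Proposition~\ref{prop:h2} by remarking that it is "similar to the proof of Proposition~\ref{prop:h1}," which is exactly the scalar least-squares minimization over $\gamma$ (giving $\gamma_\star = c^*/D$ and the residual $A - |c|^2/D$) followed by the evaluation of the denominator $D$ via $\vec b^\H(\theta)\vec b(\theta) = N_r$ and the empirical correlation identity that you spell out. Your explicit note on the normalization convention ($N_p N_{\text{cr}}$ here versus $N_p L_{\text{ss}}$ in the time-domain case) is precisely the one substantive adaptation needed, and you handle it correctly.
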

\begin{proof}
    Similar to the proof of Proposition \ref{prop:h1}. \stp 
\end{proof}

As we can see from Propositions \ref{prop:h1} and $\ref{prop:h2}$, to expect small values of $h_1(\vec y, \vec \varphi)$ and $h_2(\vec y, \vec \varphi)$ when given $\vec y$, it suffices to have large values for the numerators in \eqref{eq:h1-simplified} and \eqref{eq:h2-simplified}, respectively, if $\math R_s = P_t \mat I_{N_t}$. The two numerators are exactly the ambiguity functions (i.e., the matched filters) for pulsed multi-carrier MIMO radars in the time domain and frequency domain, respectively \cite[Eq.~(33)]{san2007mimo}, \cite[p.~131]{michael2013mimo}. The next two cost functions are inspired by this observation.

\begin{method}\label{method:cost-func-time-domain-AF}
If we work with the time-domain signals in \eqref{eq:general-radar-measurement-compact-discret}, we define
    \begin{equation}\label{eq:cost-func-time-domain-AF}
        \begin{array}{l}
            h_3(\vec y, \vec \varphi) \defeq \\ 
            \quad \displaystyle - \ln \left[\left| \sum^{N_p - 1}_{p = 0} \sum^{L - 1}_{l = 0} \vec y^\H_p(l) \vec b(\theta) \vec a^\H(\theta) \vec s_p(l - \tau) e^{j 2 \pi p T_r \nu} \right|^2\right],
        \end{array}
    \end{equation}
where $\vec y$ is a realized sample of the radar received signal $\rvec y_k$ in \eqref{eq:measurements}, $\vec \varphi \defeq (\tau, \nu, \theta) \in \R^3$ is a hypothesized ToI-parameter vector, and $\vec y_p(l)$ is a realized sample of $\rvec y_p(l)$ in \eqref{eq:general-radar-measurement-compact-discret}. \stp
\end{method}

Note that the $|\cdot|^2$ term, i.e., the argument of the $-\ln$ function, in \eqref{eq:cost-func-time-domain-AF} is the empirical matched filter in the time domain. Using Method \ref{method:cost-func-time-domain-AF}, \eqref{eq:par-representation-posterior-weight} becomes  
\begin{equation}\label{eq:Gibbs-par-weight-time-AF}
    u_i \propto \eta_i \cdot \left| \sum^{N_p - 1}_{p = 0} \sum^{L - 1}_{l = 0} \vec y^\H_p(l) \vec b(\theta_i) \vec a^\H(\theta_i) \vec s_p(l - \tau_i) e^{j 2 \pi p T_r \nu_i} \right|^{2\xi},
\end{equation}
where $(\tau_i, \nu_i, \theta_i)$ denotes the $i^\th$ particle. Eq. \eqref{eq:Gibbs-par-weight-time-AF} aligns well with our intuition because, the closer the parameter pair $(\tau_i, \nu_i, \theta_i)$ is to its true value $(\tau_0, \nu_0, \theta_0)$, the larger the value of the ambiguity function evaluated at $(\tau_i, \nu_i, \theta_i)$, and thus the more the collected radar data $\vec y$ supports the particle $(\tau_i, \nu_i, \theta_i)$.

\begin{method}\label{method:cost-func-freq-domain-AF}
If we work with the frequency-domain signals in \eqref{eq:general-radar-measurement-compact-freq}, we define
    \begin{equation}\label{eq:cost-func-freq-domain-AF}
        \begin{array}{cl}
            h_4(\vec y, \vec \varphi) &\defeq \displaystyle - \ln \Bigg[\Bigg| \sum^{N_p - 1}_{p = 0} \sum^{N_{\text{cr}} - 1}_{n = 0} \Bigg. \Bigg. \\
            & \quad \Bigg.\Bigg.\vecb y^\H_p(n)  \vec b(\theta) \vec a^\H(\theta) \vecb s_p(n) e^{-j 2 \pi n \Delta \tau} e^{j 2 \pi p T_r \nu} \Bigg|^2\Bigg],
        \end{array}
    \end{equation}
where $\vec y$ is a realized sample of the radar received signal $\rvec y_k$ in \eqref{eq:measurements}, $\vec \varphi \defeq (\tau, \nu, \theta) \in \R^3$ is a hypothesized ToI-parameter vector, and $\vecb y_p(n)$ is a realized sample of $\rvecb y_p(n)$ in \eqref{eq:general-radar-measurement-compact-freq}. \stp
\end{method}

Note that the $|\cdot|^2$ term, i.e., the argument of the $-\ln$ function, in \eqref{eq:cost-func-freq-domain-AF} is the empirical matched filter in the frequency domain. Using Method \ref{method:cost-func-freq-domain-AF}, \eqref{eq:par-representation-posterior-weight} becomes  
\begin{equation}\label{eq:Gibbs-par-weight-freq-AF}
    \begin{array}{rl}
        u_i &\propto \eta_i \cdot \Bigg| \displaystyle \sum^{N_p - 1}_{p = 0} \sum^{N_{\text{cr}} - 1}_{n = 0} \Bigg. \Bigg. \\
        & \quad \quad \quad \displaystyle \Bigg.\Bigg.\vecb y^\H_p(n)  \vec b(\theta_i) \vec a^\H(\theta_i) \vecb s_p(n) e^{-j 2 \pi n \Delta \tau_i} e^{j 2 \pi p T_r \nu_i} \Bigg|^{2 \xi},
    \end{array}
\end{equation}
where $(\tau_i, \nu_i, \theta_i)$ denotes the $i^\th$ particle. Again, the closer the parameter pair $(\tau_i, \nu_i, \theta_i)$ is to its true value $(\tau_0, \nu_0, \theta_0)$, the larger the value of the ambiguity function evaluated at $(\tau_i, \nu_i, \theta_i)$, and thus the more the collected radar data $\vec y$ supports the particle $(\tau_i, \nu_i, \theta_i)$.

The cost functions $h_3(\vec y, \vec \varphi)$ and $h_4(\vec y, \vec \varphi)$ are conceptually comply with $h_1(\vec y, \vec \varphi)$ and $h_2(\vec y, \vec \varphi)$; cf. Propositions \ref{prop:h1} and $\ref{prop:h2}$. However, $h_3(\vec y, \vec \varphi)$ and $h_4(\vec y, \vec \varphi)$ are computationally more lightweight than $h_1(\vec y, \vec \varphi)$ and $h_2(\vec y, \vec \varphi)$, especially when implementing the Gibbs-posterior-based PF-SLTR; see \eqref{eq:Gibbs-par-weight-time-AF} and \eqref{eq:Gibbs-par-weight-freq-AF}. To limit the computational complexity, hereafter, this article focuses on the cost functions $h_3(\vec y, \vec \varphi)$ and $h_4(\vec y, \vec \varphi)$. The cost functions $h_1(\vec y, \vec \varphi)$ and $h_2(\vec y, \vec \varphi)$ in this article are more analytically motivational than practically instructive; note that $h_1(\vec y, \vec \varphi)$ and $h_2(\vec y, \vec \varphi)$ motivate the proposals of $h_3(\vec y, \vec \varphi)$ and $h_4(\vec y, \vec \varphi)$, respectively.

\subsubsection{Property Analysis}
In this subsection, we analyze the properties of the proposed cost functions. As elucidated before, we are particularly interested in $h_3(\vec y, \vec \varphi)$ and $h_4(\vec y, \vec \varphi)$ for their computational advantages. Since $h_3(\vec y, \vec \varphi)$ and $h_4(\vec y, \vec \varphi)$ are mathematically equivalent under FFT and inverse FFT (IFFT), it suffices to analyze the property of $h_4(\vec y, \vec \varphi)$ only. The proposition below shows that $h_4(\vec y, \vec \varphi)$ aligns with Rule \ref{rule:cost-func-h} in a well-specified tracking gate. A tracking gate (or validation gate) is a hypothesized region in which the target is assumed to be located; that is, it is a delimited subspace of $\R^3$ in which the true target parameters $(\tau_0, \nu_0, \theta_0)$ are included.

\begin{proposition}\label{prop:h4-property}
    Suppose that the transmitted signals are spatially uncorrelated with $\mat R_s \defeq \E \rvecb s_p(n) \rvecb s^\H_p(n) = P_t \mat I_{N_t}$, for every $p \in [N_p]$ and $n \in [N_{\text{cr}}]$, and there are no scatterers, i.e., $N_s \defeq 0$ in \eqref{eq:general-radar-measurement}. Additionally, we suppose that the transmitter and the receiver are uniform linear arrays with half-wavelength spacing: that is,
    \begin{equation}\label{eq:str-vec-a}
        \vec a (\theta) \defeq [1, e^{-j \pi \sin \theta}, \ldots, e^{-j \pi (N_{t} - 1) \sin \theta}]^\T,
    \end{equation}
    and
    \begin{equation}\label{eq:str-vec-b}
        \vec b (\theta) \defeq [1, e^{-j \pi \sin \theta}, \ldots, e^{-j \pi (N_{r} - 1) \sin \theta}]^\T.
    \end{equation}
    Let $N_{tr} \defeq \max\{N_t, N_r\}$. Then, within the tracking gate
    \begin{equation}\label{eq:tracking-gate}
        \begin{array}{l}
            \displaystyle \Big[\tau_0 - \frac{1}{B},~ \tau_0 + \frac{1}{B}\Big] \times \Big[\nu_0 - \frac{1}{T_i},~ \nu_0 + \frac{1}{T_i}\Big] \times \\
            
            \quad \displaystyle  \Big[\arcsin\Big(\sin \theta_0 - \frac{2}{N_{tr}}\Big),~ \arcsin\Big(\sin \theta_0 + \frac{2}{N_{tr}}\Big)\Big],
        \end{array}
    \end{equation}
    the cost function 
    \begin{equation}\label{eq:cost-func-freq-domain-AF-expected}
        \begin{array}{l}
        -\ln \Bigg[\Bigg| \displaystyle  \sum^{N_p - 1}_{p = 0} \sum^{N_{\text{cr}} - 1}_{n = 0} \E_{\rvecb y_p(n), \rvecb s_p(n)} \bigg[ \Bigg. \Bigg. \bigg. \\
         \quad \quad \quad \quad \Bigg. \Bigg. \bigg. \rvecb y^\H_p(n) \vec b(\theta) \vec a^\H(\theta) \rvecb s_p(n) e^{-j 2 \pi n \Delta \tau} e^{j 2 \pi p T_r \nu} \bigg] \Bigg|^2 \Bigg],
        \end{array}
    \end{equation}
    induced by $h_4(\vec y, \vec \varphi)$ in \eqref{eq:cost-func-freq-domain-AF}, through inner-expectation over $\rvecb y_p(n)$ and $\rvecb s_p(n)$, satisfies Rule \ref{rule:cost-func-h}.
\end{proposition}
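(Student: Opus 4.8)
The plan is to evaluate the inner expectation in \eqref{eq:cost-func-freq-domain-AF-expected} in closed form under the stated assumptions, show that the modulus of the resulting quantity factorizes into a product of Dirichlet-type kernels in the range, Doppler, and angle variables, and then verify that each kernel decreases monotonically as the corresponding coordinate of $\vec\varphi$ moves away from its true value throughout the gate \eqref{eq:tracking-gate}. Because the cost function is minus the logarithm of the squared modulus, monotone decay of the modulus translates into monotone growth of the cost, which is exactly the content of Rule \ref{rule:cost-func-h}. Throughout, I read \emph{$\vec\varphi_1$ is preferred over $\vec\varphi_2$} as $\vec\varphi_1$ being at least as close as $\vec\varphi_2$ to $(\tau_0,\nu_0,\theta_0)$ in each coordinate.

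First I would substitute the scatterer-free model, i.e., \eqref{eq:general-radar-measurement-compact-freq} with $N_s=0$, into the bracketed argument. Expressing $\rvecb y^\H_p(n)$ through $\beta_0$, $\vec a(\theta_0)$, $\vec b(\theta_0)$, and $\rvecb s_p(n)$ splits the summand into a term quadratic in the signal and a cross term bilinear in noise and signal. Taking $\E_{\rvecb s_p(n)}$ and using $\E\rvecb s_p(n)\rvecb s^\H_p(n)=P_t\mat I_{N_t}$ collapses the quadratic form $\rvecb s^\H_p(n)\vec a(\theta_0)\vec a^\H(\theta)\rvecb s_p(n)$ to $P_t\,\vec a^\H(\theta)\vec a(\theta_0)$, while the cross term vanishes because $\rvecb n_{c,p}(n)$ is zero-mean and independent of the signal. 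What remains is $\beta_0^{*}P_t\,[\vec a^\H(\theta)\vec a(\theta_0)]\,[\vec b^\H(\theta_0)\vec b(\theta)]$ multiplied by the separable phase sum $\big(\sum_{n}e^{-j2\pi n\Delta(\tau-\tau_0)}\big)\big(\sum_{p}e^{j2\pi pT_r(\nu-\nu_0)}\big)$.

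Next I would pass to moduli. Each of the four factors is a finite geometric sum, hence a Dirichlet kernel: the range factor has modulus $|\sin(\pi B(\tau-\tau_0))|/|\sin(\pi\Delta(\tau-\tau_0))|$ with first nulls at $\tau-\tau_0=\pm 1/B$; the Doppler factor, using $T_i=N_pT_r$, has first nulls at $\nu-\nu_0=\pm 1/T_i$; and, with the half-wavelength ULA steering vectors \eqref{eq:str-vec-a}--\eqref{eq:str-vec-b}, the transmit and receive spatial factors are Dirichlet kernels in $\sin\theta-\sin\theta_0$ with first nulls at $\pm 2/N_t$ and $\pm 2/N_r$. The prefactor $|\beta_0|P_t$ is $\vec\varphi$-independent and drops out of any comparison. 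The purpose of the gate \eqref{eq:tracking-gate} is precisely that each of its edges is the first null of the relevant kernel---the angular edge $\pm 2/N_{tr}$ being the narrower of the two spatial nulls, hence inside the main lobe of both spatial kernels---so every coordinate of $\vec\varphi$ stays within the main lobe.

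The crux, and the step I expect to demand the most care, is the monotonicity of the Dirichlet-kernel modulus on its main lobe: I would prove that $x\mapsto|\sin(\pi Nx)/\sin(\pi x)|$ is strictly decreasing on $[0,1/N]$, so that the range and Doppler factors each decrease as their coordinate leaves the truth. For the angle, two kernels multiply; since the gate satisfies $2/N_{tr}\le\min\{2/N_t,2/N_r\}$, both kernels are positive and decreasing in $|\sin\theta-\sin\theta_0|$ on the gate, so their product is decreasing as well, and composing with the increasing map $\sin(\cdot)$ carries this back to $\theta$. Putting the three coordinates together, the modulus is coordinatewise nonincreasing toward $(\tau_0,\nu_0,\theta_0)$ inside the gate, whence $-\ln|\cdot|^{2}$ is coordinatewise nondecreasing. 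Consequently, if $\vec\varphi_1$ is at least as close as $\vec\varphi_2$ to $(\tau_0,\nu_0,\theta_0)$ in every coordinate, then \eqref{eq:cost-func-freq-domain-AF-expected} evaluated at $\vec\varphi_1$ does not exceed its value at $\vec\varphi_2$, which is Rule \ref{rule:cost-func-h}.
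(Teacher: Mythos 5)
Your proposal is correct and follows essentially the same route as the paper's proof: substitute the scatterer-free model, take the inner expectation using $\mat R_s = P_t\mat I_{N_t}$ to kill the noise cross-term, factorize the modulus into a product of Dirichlet kernels in range, Doppler, and the two spatial dimensions, and match the gate edges to the first nulls. The only difference is that you make explicit the monotone decay of each Dirichlet kernel on its main lobe, a step the paper leaves implicit after locating the nulls; this is a welcome tightening rather than a different argument.
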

\begin{proof}
    See Appendix \ref{append:h4-property}.
    \stp
\end{proof}

Due to the randomness of $\rvecb y_p(n)$ and $\rvecb s_p(n)$ in \eqref{eq:cost-func-freq-domain-AF}, we resort to the expected version in \eqref{eq:cost-func-freq-domain-AF-expected} for property analysis, to be specific, the analysis on the statistical property instead of the single-sample property. In the practice of PF-SLTR, however, \eqref{eq:cost-func-freq-domain-AF} should be used to update the weights of the particles; see \eqref{eq:Gibbs-par-weight-freq-AF}. 
When $\mat R_s$, $\vec a$, and $\vec b$ are specified by other more complicated values, the tracking gate has to be calculated differently. When $N_s$ is nonzero, we suppose that scatterers are well-separated from the ToI so the interference can be negligible within the specified tracking gate.



Next, we investigate the role of the parameter $\xi$ in \eqref{eq:Gibbs-par-weight-time-AF} and \eqref{eq:Gibbs-par-weight-freq-AF}. Let $\vec \alpha \defeq [\alpha_0, \alpha_1, \ldots, \alpha_{N_{\text{par}}-1}] > 0$ denote a non-degenerate $N_{\text{par}}$-dimensional discrete probability vector, summing to unity. Taking \eqref{eq:Gibbs-par-weight-time-AF} as an example, we can define
\[
\alpha_i \propto \left| \sum^{N_p - 1}_{p = 0} \sum^{L - 1}_{l = 0} \vec y^\H_p(l) \vec b(\theta_i) \vec a^\H(\theta_i) \vec s_p(l - \tau_i) e^{j 2 \pi p T_r \nu_i} \right|^2,
\]
resulting in the following rule for generating posterior weights:
\[
u_i \propto \eta_i \alpha^\xi_i, ~~~ \forall i \in [N_{\text{par}}].
\]
Let $\vec \alpha^{(\xi)} \defeq [\alpha^{\xi}_0, \alpha^{\xi}_1, \ldots, \alpha^{\xi}_{N_{\text{par}}-1}]/C_{\xi}$ denote a new distribution powered by $\xi \ge 0$ from $\vec \alpha$, where $C_{\xi} \defeq \sum^{N_{\text{par}}-1}_{i = 0} \alpha^{\xi}_i$ is the normalization factor. We have the following result regarding the entropy values of $\vec \alpha$ and $\vec \alpha^{(\xi)}$.

\begin{theorem}\label{thm:role-xi}
    It holds that
    \[
        - \displaystyle \sum^{N_{\text{par}} - 1}_{i = 0} \alpha_i \ln \alpha_i
        \begin{cases}
           \le - \displaystyle \sum^{N_{\text{par}} - 1}_{i = 0} \alpha^{(\xi)}_i \ln \alpha^{(\xi)}_i,     &\text{if } 0 \le \xi \le 1; \\

           \ge - \displaystyle \sum^{N_{\text{par}} - 1}_{i = 0} \alpha^{(\xi)}_i \ln \alpha^{(\xi)}_i,     &\text{if } \xi \ge 1.
        \end{cases}
    \]
\end{theorem}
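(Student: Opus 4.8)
The plan is to prove the stronger statement that the Shannon entropy $H(\xi) \defeq -\sum_{i=0}^{N_{\text{par}}-1} \alpha^{(\xi)}_i \ln \alpha^{(\xi)}_i$ of the tempered vector $\vec \alpha^{(\xi)}$ is a \emph{non-increasing} function of $\xi$ on $[0,\infty)$. Since $\vec \alpha^{(1)} = \vec \alpha$, both claimed inequalities then follow at once by comparing $H(\xi)$ against $H(1)$: for $0 \le \xi \le 1$ monotonicity gives $H(\xi) \ge H(1)$, and for $\xi \ge 1$ it gives $H(\xi) \le H(1)$. The key structural observation is that $\vec \alpha^{(\xi)}$ is an exponential family in the natural parameter $\xi$, with sufficient statistic $g_i \defeq \ln \alpha_i$ and log-partition function $\ln C_\xi = \ln \sum_{i=0}^{N_{\text{par}}-1} e^{\xi g_i}$; all the $g_i$ are finite because $\vec \alpha$ is non-degenerate, i.e. strictly positive.

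First I would rewrite the entropy in closed form. Writing $\ln \alpha^{(\xi)}_i = \xi g_i - \ln C_\xi$ and summing against $\alpha^{(\xi)}_i$ gives $H(\xi) = \ln C_\xi - \xi\, m(\xi)$, where $m(\xi) \defeq \sum_i \alpha^{(\xi)}_i g_i$ is the $\xi$-tempered mean of $\ln \alpha$. Differentiating the log-partition function yields the standard exponential-family identity $\tfrac{d}{d\xi} \ln C_\xi = m(\xi)$, so the two $m(\xi)$ contributions cancel in $H'(\xi)$ and we are left with $H'(\xi) = -\xi\, m'(\xi)$.

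Next I would identify $m'(\xi)$. A direct differentiation of $m(\xi) = \big(\sum_i g_i e^{\xi g_i}\big)/\big(\sum_i e^{\xi g_i}\big)$ — again the standard cumulant-generating-function calculation — gives $m'(\xi) = \sum_i \alpha^{(\xi)}_i g_i^2 - \big(\sum_i \alpha^{(\xi)}_i g_i\big)^2$, which is precisely the variance of $\ln \alpha$ under $\vec \alpha^{(\xi)}$ and hence nonnegative. Therefore $H'(\xi) = -\xi\, \mathrm{Var}_\xi(\ln \alpha) \le 0$ for every $\xi \ge 0$, establishing the monotonicity and completing the argument.

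I do not anticipate a serious obstacle; the reasoning is elementary once the tempered family is viewed through its log-partition function. The only points needing a little care are the finiteness of $g_i = \ln \alpha_i$ (guaranteed by non-degeneracy) and the boundary behaviour at $\xi = 0$, where $H'(0) = 0$ and $\vec \alpha^{(0)}$ is the uniform vector of maximal entropy, consistent with the bound. An alternative, calculus-free route is to show that $\vec \alpha$ majorizes $\vec \alpha^{(\xi)}$ for $\xi \le 1$ and is majorized by it for $\xi \ge 1$, and then invoke the Schur-concavity of Shannon entropy; this avoids differentiation but requires separately establishing the majorization relation, so the variance-based argument above is the more direct path.
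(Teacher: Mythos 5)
Your proposal is correct and follows essentially the same route as the paper: both show that $\xi \mapsto -\sum_i \alpha^{(\xi)}_i \ln \alpha^{(\xi)}_i$ is monotonically non-increasing by differentiating and identifying the derivative as $-\xi$ times a nonnegative variance-type quantity. The only cosmetic difference is that you package the computation in exponential-family/cumulant language, whereas the paper carries out the differentiation directly and certifies nonnegativity via the Cauchy inequality — which is exactly the statement that $\mathrm{Var}_\xi(\ln\alpha)\ge 0$.
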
       
\begin{proof}
    See Appendix \ref{append:role-xi}, where it shows that the entropy function $\xi \mapsto - \sum^{N_{\text{par}} - 1}_{i = 0} \alpha^{(\xi)}_i \ln \alpha^{(\xi)}_i$ is  monotonically decreasing in $\xi \ge 0$. \stp
\end{proof}

Theorem \ref{thm:role-xi} implies that when $0 \le \xi \le 1$, we make the $\vec y$-data evidence (i.e., likelihood values $\{\alpha_i\}_{\forall i \in [N_{\text{par}}]}$) more balanced across all particles $\{\vec x_i\}_{\forall i \in [N_{\text{par}}]}$; when $\xi \ge 1$, we let the $\vec y$-data evidence be more concentrated on particles that have large likelihoods. An illustrative example is given below.

\begin{example}\label{ex:role-xi}
Let $\vec \alpha \defeq [0.2, 0.5, 0.3]$. If $\xi = 0.5$, we have $\vec \alpha^{(0.5)} = [0.2628, 0.4154, 0.3218]$; if $\xi = 2$, we have $\vec \alpha^{(2)} = [0.1053, 0.6579, 0.2368]$. As indicated, $\vec \alpha^{(0.5)}$ is more balanced (i.e., having smaller differences) than $\vec \alpha$, while $\vec \alpha^{(2)}$ is more polarized (i.e., having larger differences) than $\vec \alpha$.
\stp
\end{example}

In the context of particle filtering, balanced prior and likelihood weights lead to balanced posterior weights, which in turn can benefit combating particle degeneracy, provided that particles are diverse and informative. In particle filtering, to ensure diversity and informativity, particles should be drawn from a well-designed proposal distribution \cite[pp.~178-180]{arulampalam2002tutorial}.


\section{MIMO-OFDM Integrated Sensing and Communications: PF-SLTR}\label{sec:MIMO-OFDM-ISAC}

The proposed time structure of OFDM ISAC waveforms for PF-SLTR is shown in Fig. \ref{fig:OFDM_ISAC_frame}, which applies to every transmit antenna in MIMO. A target tracking period $T_t$, which is usually in dozens of milliseconds (ms; e.g., 25ms, 50ms), contains one ISAC period $T_i$ to sense the moving ToI. Tens or hundreds of microseconds ($\mu$s) are, in practice, sufficient for an ISAC period. To save computing resources and without compromising the necessary accuracy of target tracking, there is no need to use the entire $T_t$ time window for radar sensing. The remaining $T_t - T_i$ seconds can be completely designed for pure communications to improve, e.g., throughput and reliability. Depending on how we use an MIMO-OFDM ISAC station, we can employ two types of waveform structures:
\begin{itemize}
    \item \textbf{Pulsed Waveforms}: In this scheme, silent times are included between consecutive pulses. The transmit array and the radar receive array operate in either a time-division duplex or full-duplex manner, corresponding to the non-blind minimum round-trip delay of $T_p$ or $0$, respectively. During these silent times, the radar receiver can listen to \textit{far} target echoes and resolves target parameters. This scheme is identical to that of the typical pulse-Doppler radars, and the maximum detectable range is determined by the duration of the silence, i.e., $T_r - T_p$, where $T_p$ is the pulse duration and $T_r$ is the pulse repetition interval. Within each pulse duration, several OFDM symbols, together with their cyclic prefixes (CPs), are arranged back-to-back. This is a \textit{sensing-centric} solution where the ranging capacity is improved by adding silent times and sacrificing the communications throughput.

    \item \textbf{Continuous-Wave (CW) Waveforms}: In this scheme, no silent times are contained between consecutive pulses. The transmit array and the radar receive array are supposed to be well isolated, and they operate in a full-duplex manner. The maximum resolvable range is determined by the length of the CP, which ensures that the \textit{near} ToI can be treated as a scatterer. As a result, the CP length fully covers all possible path delays so that the circular convolution property is preserved for FFT-based signal processing, for example, frequency-domain channel estimation for communications and matched filtering for ranging. This is a \textit{communications-centric} solution where the CP duration limits the ranging capacity, while the communication data streams remain unaffected.
\end{itemize}

Note that in the pulsed case, the circular convolution property is not inherently preserved for sensing signal processing, because the ToI, as a scatterer in the sensing channel, may not fall within the CP duration. Therefore, FFT-based frequency-domain methods require additional treatment compared to the CW case, as to be elaborated on in the following subsections. However, the scattering effect of the ToI may not impact the communication channel, since in the pulsed case, targets are typically far from the station, whereas communication users are located much closer. If the sensing ToI and communication users are both nearby, the CW scheme—where the CP can cover all path delays—should be adopted.

\begin{figure}[!htbp]
    \centering

    \subfigure[An ISAC period $T_i$ is part of a tracking time window $T_t$]{
        \centering
        \includegraphics[width=8.55cm]{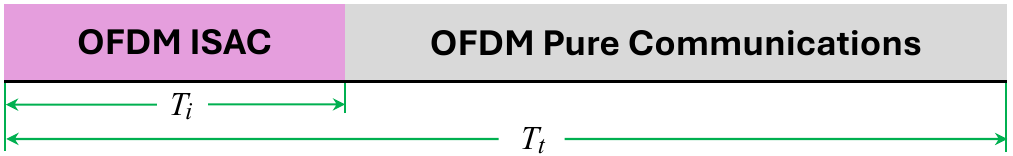}
    }

    \subfigure[Pulsed: A CPI $T_i$ with silent times defines an ISAC period]{
        \centering
        \includegraphics[width=8.5cm]{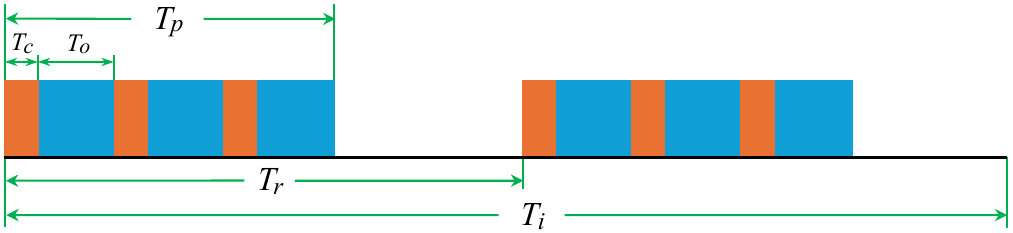}
    }

    \subfigure[CW: A CPI $T_i$ without silent times defines an ISAC period]{
        \centering
        \includegraphics[width=8.5cm]{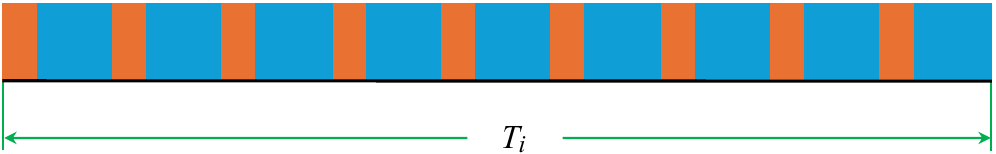}
    }
    
    \caption{The time structure of OFDM ISAC waveforms. Cyclic prefixes (CPs) are visualized by orange rectangles, while data payloads are indicated by blue rectangles. (b): pulsed OFDM ISAC waveforms; (c): continuous-wave (CW) OFDM ISAC waveforms. The tracking time period is $T_t$, and the ISAC period is $T_i$. For pulsed waveforms in (b), within a coherent processing interval (CPI) $T_i$, two pulses repeat with the pulse repetition interval (PRI) $T_r$. Each pulse has the time duration $T_p$. The silent time between two consecutive pulses is therefore $T_r - T_p$. Within each pulse, three CP-appended OFDM symbols are included; the duration of each OFDM symbol is $T_o$, whereas that of each CP is $T_c$. For CW waveforms in (c), we can treat $T_r = T_p$ so that no silent times are inserted. The Doppler processing for both structures is algorithmically similar. The maximum detectable range for the pulsed case is determined by the silence time through $C\cdot(T_r - T_p)/2$, while that for the CW case is by the CP duration through $CT_c/2$.}
    \label{fig:OFDM_ISAC_frame}
\end{figure}

\subsection{Pulsed MIMO-OFDM ISAC}\label{subsec:pulsed-MIMO-OFDM}
In this mode, MIMO-OFDM ISAC works like a pulsed radar, and the only distinction is that the OFDM waveforms are employed at transmit antennas, instead of the canonical radar waveforms (e.g., phase-coded and chirp signals). Suppose that $M$ OFDM symbols are included in each pulse, and the number of subcarriers in OFDM is $N_c$. We have $T_p = M \cdot (T_c + T_o)$ and $T_o = 1/\Delta_f = N_c/B = N_c T_s$, where $\Delta_f$ is the subcarrier spacing. Let $\rvec c_{p,m}(n) \in \C^{N_t}$ denote the constellation points on the $n^\th$ subcarrier in the $p^\th$ pulse and $m^\th$ OFDM symbol, where $p \in [N_p]$, $m \in [M]$, and $n \in [N_c]$. The time-domain waveform $\rvec s_p(l)$ to be transmitted in the $p^\th$ pulse is
\begin{equation}\label{eq:pulsed-OFDM-tx-signal}
    \rvec s_p(l) = 
    \begin{cases}
        \rvec s_{p, \text{eff}}(l),     &\textbf{ if } 0 \le l \le \round{T_p/T_s}, \\
        \vec 0, &\textbf{ if } \round{T_p/T_s} + 1 \le l \le \round{T_r / T_s},
    \end{cases}
\end{equation}
where $\rvec s_{p, \text{eff}}(l)$ for $0 \le lT_s \le T_p$ is the effective (i.e., non-zero) transmit waveform at the logical time $l$ in the $p^\th$ pulse, consisting of $M$ back-to-back CP-appended OFDM time-domain symbols; $\{\rvec s_{p, \text{eff}}(l)\}_{l:~0 \le l \le \round{T_p/T_s}}$ is determined by $\{\rvec c_{p,m}(n)\}_{\forall p \in [N_p], m \in [M], n \in [N_c]}$ via IFFT and CP addition; see, e.g., \cite[Eq.~(3)]{zhang2024input}, \cite[Eq.~(2)]{sturm2011waveform}, \cite[Eq.~(1)]{koivunen2024multicarrier}. Note that within the time window $[T_p, T_r]$, the transmitter keeps silent. Note also that $N_{\text{cr}} \ne N_c$: the former is the FFT length of the whole transmit signal $\{\rvec s_{p}(l)\}_{\forall l \in [L]}$ in which multiple CP-added OFDM symbols and a silent time slot are contained [cf. \eqref{eq:general-radar-measurement-compact-freq} and \eqref{eq:cost-func-freq-domain-AF}], while the latter is the FFT length of one single CP-excluded OFDM symbol. In the practice of pulsed radars, $N_c \ll N_{\text{cr}} = L$.

The transmit MIMO-OFDM time-domain signals $\rvec s_p$ defined in \eqref{eq:pulsed-OFDM-tx-signal} are then propagated through the MIMO channel \eqref{eq:general-radar-measurement-compact-discret} and received by the radar receive array as $\{\rvec y_p(l)\}_{\forall l \in [L]}$. The data evidence for a hypothesized ToI parameter pair $\vec \varphi \defeq (\tau, \nu, \theta) \in \R^3$ is quantified by the cost functions in Methods \ref{method:cost-func-time-domain} and \ref{method:cost-func-time-domain-AF}, to be fed into the Gibbs-posterior-based PF-SLTR \eqref{eq:par-representation-posterior-weight}. If frequency-domain Methods \ref{method:cost-func-freq-domain} and \ref{method:cost-func-freq-domain-AF} are to be used, FFT from $\{\rvec s_{p}(l)\}_{\forall l \in [L]}$ to $\{\rvecb s_{p}(n)\}_{\forall n \in [N_{\text{cr}}]}$ must be explicitly conducted \textit{without removing CPs}, because the circular convolution property of FFT was destroyed, so that CPs are also part of effective sensing waveforms and $\{\rvecb s_{p}(n)\}_{\forall n \in [N_{\text{cr}}]}$ do not equal to $\{\rvec c_{p,m}(n)\}_{\forall m \in [M], n \in [N_c]}$.


\subsection{Continuous-Wave MIMO-OFDM ISAC}\label{subsec:CW-MIMO-OFDM}
The pulse-Doppler radar signal processing can also be applied to CW MIMO-OFDM ISAC systems, just treating $T_r = T_p = M \cdot (T_c + T_o)$. Because we assumed that the maximum target round-trip delay is covered by the CP duration, the circular convolution property of the FFT can be preserved. As a result, efficient frequency-domain processing is applicable. To be specific, \textit{after CP removal}, the MIMO channel propagation can be written as
\begin{equation}\label{eq:CW-OFDM-Waveform}
    \begin{array}{cl}
        \rvecb y_{p,m}(n) &= \beta_0 \vec b(\theta_0) \vec a^\H(\theta_0) \rvec c_{p,m}(n) e^{-j 2 \pi n \Delta_f \tau_0} e^{j 2 \pi p T_r \nu_0} + \\ 
        & \quad \quad \quad \quad \rvecb n_{c,p,m}(n),
    \end{array}
\end{equation}
for every $p \in [N_p]$, $m \in [M]$, and $n \in [N_c]$ (not $[N_{\text{cr}}]$). Note that within the same pulse, the Doppler frequency shift is assumed to be constant so that the Doppler phase term $e^{j 2 \pi p T_r \nu_0}$ does not depend on $m$. Therefore, the data evidence for a hypothesized ToI parameter pair $\vec \varphi \defeq (\tau, \nu, \theta) \in \R^3$ is quantified by the cost functions in frequency-domain Methods \ref{method:cost-func-freq-domain} and \ref{method:cost-func-freq-domain-AF}. Note that $h_2$ and $h_4$ need to be slightly modified to encompass the summations over $m \in [M]$ and $n \in [N_c]$:
    \begin{equation}\label{eq:cost-func-freq-domain-2}
        \begin{array}{cl}
            h^\prime_2(\vec y, \vec \varphi) &\defeq \displaystyle \min_{\vec \gamma \in \C} \sum^{N_p - 1}_{p = 0} \sum^{M - 1}_{m = 0}  \sum^{N_c - 1}_{n = 0} \Big\|\vecb y_{p,m}(n) - \Big.\\ 
            & \quad \quad \Big. \vec \gamma \vec b(\theta) \vec a^\H(\theta) \rvec c_{p,m}(n) e^{-j 2 \pi n \Delta_f \tau} e^{j 2 \pi p T_r \nu} \Big\|^2_2,
        \end{array}
    \end{equation}
and
    \begin{equation}\label{eq:cost-func-freq-domain-AF-2}
        \begin{array}{cl}
            h^\prime_4(\vec y, \vec \varphi) &\defeq \displaystyle - \ln \Bigg[\Bigg| \sum^{N_p - 1}_{p = 0} \sum^{M - 1}_{m = 0} \sum^{N_c - 1}_{n = 0} \Bigg. \Bigg. \\
            & \Bigg.\Bigg.\vecb y^\H_{p,m}(n)  \vec b(\theta) \vec a^\H(\theta) \rvec c_{p,m}(n) e^{-j 2 \pi n \Delta_f \tau} e^{j 2 \pi p T_r \nu} \Bigg|^2\Bigg].
        \end{array}
    \end{equation}

Numerically evaluating \eqref{eq:cost-func-freq-domain-2} and \eqref{eq:cost-func-freq-domain-AF-2} in the frequency domain, especially when the FFT technique is applied, can be significantly faster than evaluating \eqref{eq:cost-func-time-domain} and \eqref{eq:cost-func-time-domain-AF} in the time domain. This highlights the computational benefit of the CW MIMO-OFDM ISAC systems, however, coming with a limited ranging capability set by the CP duration. In contrast, pulsed MIMO-OFDM ISAC systems can support much larger ranging regions without considering the time length of CPs, however, at the cost of a higher computational burden\footnote{Although this burden can be algorithmically reduced by FFT tricks; cf. \eqref{eq:cost-func-time-domain-AF} and \eqref{eq:cost-func-freq-domain-AF}; NB: in practice, $M N_c \le N_{\text{cr}}$ and $N_c \ll N_{\text{cr}}$.} and lower communications throughput. 

\subsection{Overall PF-SLTR Algorithm}
In this section, we summarize the PF-SLTR for MIMO-OFDM ISAC systems in Algorithm \ref{algo:PF-SLTR}, which is adapted from the canonical particle filter for dynamic systems \eqref{eq:target-trcking}; see \cite[Algorithm 4]{arulampalam2002tutorial}, \cite[p.~468]{simon2006optimal}. Some algorithmic details and operational remarks, such as the basics of particle filtering and the notions of resampling and effective sample size, can be seen in \cite{arulampalam2002tutorial}, \cite[Chap.~15]{simon2006optimal}, \cite{godsill2019particle}.

\begin{algorithm}[htbp]
    \caption{PF-SLTR in MIMO-OFDM ISAC}
    \label{algo:PF-SLTR}
    \begin{flushleft}
        \justifying
        \textbf{Definition}: $k$ is the discrete tracking-time index; $N_{\text{par}}$ is the number of state particles; $\vec x_{k, i}$ is the posterior state particles at $k$ and $u_{k,i}$ the associated weights, $\forall i \in [N_{\text{par}}]$; $\vec \varphi_{k,i}$ is the range-Doppler-angle representation of $\vec x_{k, i}$; $N_{\text{eff}}$ is the effective sample size and $N_{\text{thres}}$ its threshold. \\
        \textbf{Remark}: For illustration and computational efficiency, we only show frequency-domain processing using \eqref{eq:cost-func-freq-domain-AF} and \eqref{eq:cost-func-freq-domain-AF-2}. In addition, we suppose that the waveforms at transmit antennas are uncorrelated, i.e., $\math R_s \defeq P_t \mat I_{N_t}$; otherwise, cost functions can be modified by incorporating denominators as in Propositions \ref{prop:h1} and \ref{prop:h2}.\\
        \textbf{Initialization}: $N_{\text{par}}$, $N_{\text{thres}}$, $\xi$, and $\{\vec x_{0,i}, u_{0,i}\}_{i \in [N]}$.
    \end{flushleft}
    \begin{algorithmic}[1]
        \Require Radar received signals $\bm y_k,~k=1,2,3,...$; see \eqref{eq:measurements}
        \State // \textit{Step 1: Generate Prior State Particles; See \eqref{eq:target-trcking}}
        \For {$i=0:N_{\text{par}}-1$}
            \State Sample $\vec w_{k-1, i}$ from the distribution of $\rvec w_{k-1}$
            \State $\vec x_{k, i} = \bm f_k (\vec x_{k-1, i}, \vec w_{k-1,i})$
            \State Translate $\vec x_{k, i}$ to $\vec \varphi_{k,i}$ using geometric relations
            \State $\eta_{k,i} \leftarrow u_{k-1,i}$
        \EndFor
        
        \State // \textit{Step 2: Evaluate $\vec y_k$-Data Evidence for Every $\vec \varphi_{k,i}$}
        \For {$i=0:N_{\text{par}}-1$}
            \If{Pulsed Scheme}
                \State Calculate cost $h(\vec y_k, \vec \varphi_{k,i})$ using \eqref{eq:cost-func-freq-domain-AF}
            \ElsIf{CW Scheme}
                \State Calculate cost $h(\vec y_k, \vec \varphi_{k,i})$ using \eqref{eq:cost-func-freq-domain-AF-2}
            \EndIf
        \EndFor

        \State // \textit{Step 3: Generate Posterior Particles $\vec x_{k,i}$; See \eqref{eq:par-representation-posterior-weight}}
        \For {$i=0:N_{\text{par}}-1$}
            \State Keep $\vec x_{k,i}$ unchanged
            \State Update weight by $u_{k,i} \leftarrow \eta_{k,i} \cdot e^{-\xi h(\vec y_k, \vec \varphi_{k,i})}$
        \EndFor
        \State Normalize weights $\{u_{k,i}\}_{\forall i \in[N_{\text{par}}]}$
        
        \State  // \textit{Step 4: Resampling}
        \State $N_{\text{eff}} \leftarrow 1/\sum^{N_{\text{par}} - 1}_{i=0} u_{k,i}$
        \If {$N_{\text{eff}} < N_{\text{thres}}$}
            \State Resample $N_{\text{par}}$ times from $\{\bm x_{k,i},u_{k,i}\}_{\forall i\in [N_{\text{par}}]}$
            \State $u_{k,i} \leftarrow 1/N_{\text{par}},~\forall i \in [N_{\text{par}}]$
        \EndIf 


        \Ensure Posterior particles and weights $\{\vec x_{k,i}, u_{k,i}\}_{\forall i\in [N_{\text{par}}]}$.
    \end{algorithmic}
\end{algorithm}

The computational complexity, in time, of PF-SLTR can be analyzed as follows:
\begin{itemize}
    \item In Step 1, if $\vec f_k$ is linear in $\rvec x_{k-1}$, which is the case for the usual constant-velocity and constant-acceleration models \cite{li2003survey}, the complexity is $\cal O(N_{\text{par}} d^2_{x})$, where $d_{x}$ denotes the dimension of $\rvec x_{k-1}$ (assuming fixed across different $k$'s).

    \item In Step 2, for the pulsed scheme, the complexity is $\cal O(N_{\text{par}} N_p N_{\text{cr}} (N_t + N_r))$; for the CW scheme, the complexity is $\cal O(N_{\text{par}} N_p M N_c (N_t + N_r))$.

    \item In Step 3, the complexity is $\cal O(N_{\text{par}})$ because the $\vec y_k$-data evidence $\{h(\vec y_k, \vec \varphi_{k,i})\}_{\forall i \in [N_{\text{par}]}}$ of the particles have already been evaluated in Step 2. 

    \item In Step 4, the complexity is $\cal O(N_{\text{par}})$, if the systematic resampling is adopted.
\end{itemize}
Overall, the computational complexity in time of PF-SLTR is $\cal O(N_{\text{par}} d^2_{x} + N_{\text{par}} N_p N_{\text{cr}} (N_t + N_r))$ for the pulsed scheme, and is $\cal O(N_{\text{par}} d^2_{x} + N_{\text{par}} N_p M N_c (N_t + N_r))$ for the CW scheme.

\section{Experiments}\label{sec:experiment}
In this section, we use experiments to validate the efficiency of the proposed PF-SLTR (i.e., Algorithm \ref{algo:PF-SLTR}) in MIMO-OFDM ISAC systems. Suppose that the state vector $\rvec x_k$ contains the ToI parameters $\vec \varphi_k = (\tau_{0,k}, \nu_{0,k}, \theta_{0,k})$ in radar's polar coordinate. For the state transition equation $\rvec x_k  =  \vec f_k(\rvec x_{k-1}, \rvec w_{k-1})$ in \eqref{eq:target-trcking}, we use the constant-velocity (CV) model \cite{li2001survey} because within every coherent processing interval, the range-Doppler-angle parameters of the ToI are assumed to remain unchanged. The following three particle filtering schemes are implemented for performance comparison.

\begin{itemize}
    \item PF-ILTR: The traditional particle filtering (PF) for information-level target tracking (ILTR). First, we obtain the measurements of $(\tau_0, \nu_0, \theta_0)$ at each tracking time $k$, using the $3$-dimensional (3D) range-Doppler-DoA matched filter. For demonstration purposes only, other advanced methods, e.g., \cite{zheng2017super,wu2022super,xiao2024novel}, \cite[Table~II]{zhang2021overview}, are not implemented because they are even more computationally prohibitive for real-time use (e.g., within the tracking time window $T_t = 50$ms). The computational complexity, in time, of the 3D range-Doppler-DoA matched filters \eqref{eq:cost-func-freq-domain-AF} and \eqref{eq:cost-func-freq-domain-AF-2} is $\cal O(N^3_{\text{tg}} N_p N_{\text{cr}} (N_t + N_r))$ and $\cal O(N^3_{\text{tg}} N_p M N_{\text{c}} (N_t + N_r))$, respectively, where $N_{\text{tg}}$ is the length of the tracking gate in each dimension. To clarify, to search for the peak of the matched filter output in a $N_{\text{tg}} \times N_{\text{tg}} \times N_{\text{tg}}$ grid of range-Doppler-DoA, we need to evaluate the ambiguity function \eqref{eq:cost-func-freq-domain-AF} or \eqref{eq:cost-func-freq-domain-AF-2} $N^3_{\text{tg}}$ times. As indicated, even $N_{\text{tg}} = 30$, we need to evaluate ambiguity functions $27000$ times, and the complexity of each time is $\cal O(N_p N_{\text{cr}} (N_t + N_r))$ or $\cal O(N_p M N_{\text{c}} (N_t + N_r))$. For systems with large $N_p$, $N_{\text{cr}}$ (i.e., $L$), $M$, $N_c$, $N_t$, and $N_r$, even the traditional 3D range-Doppler-DoA matched filtering is unaffordable, no mention advanced approaches \cite{zheng2017super,wu2022super}, \cite[Table~II]{zhang2021overview}.

    \item PF-SLTR-A: The traditional particle filtering (PF) for signal-level target tracking (SLTR) that augments (A) the real and imaginary components of the complex gain $\beta_0$ (due to the DoI) into the state vector $\rvec x_k$ \cite{ratpunpairoj2015particle,huleihel2013optimal,herbert2017mmse}. It is assumed that the conditional distribution $q(\vec y_k | \vec x_k)$ is complex Gaussian and there are no scatterers. To be specific, at the tracking time $k$, for every $p,m,n$, we have
    \begin{equation}\label{eq:likelihood-PF-SLTR-A-pulsed}
        q(\vecb y_{k,p}(n) | \beta_0, \tau_0, \nu_0, \theta_0) \sim \cal{CN}(\vec \epsilon_1,~\sigma^2_{\text{cn}} \mat I_{N_r})
    \end{equation}
    where $\vec \epsilon_1 \defeq \beta_0 \vec b(\theta_0) \vec a^\H(\theta_0) \rvecb s_p(n) e^{-j 2 \pi n \Delta \tau_0} e^{j 2 \pi p T_r \nu_0}$, for the pulsed MIMO-OFDM paradigm [see \eqref{eq:general-radar-measurement-compact-freq}], or 
    \begin{equation}\label{eq:likelihood-PF-SLTR-A-CW}
    q(\vecb y_{k,p,m}(n) |  \beta_0, \tau_0, \nu_0, \theta_0) \sim \cal{CN}(\vec \epsilon_2,~\sigma^2_{\text{cn}} \mat I_{N_r})
    \end{equation}
    where $\vec \epsilon_2 \defeq \beta_0 \vec b(\theta_0) \vec a^\H(\theta_0) \rvec c_{p,m}(n) e^{-j 2 \pi n \Delta_f \tau_0} e^{j 2 \pi p T_r \nu_0}$ for the CW MIMO-OFDM paradigm [see \eqref{eq:CW-OFDM-Waveform}]; $\cal{CN}(\vec \epsilon,~\sigma^2_{\text{cn}} \mat I_{N_r})$ denotes the complex Gaussian distribution with mean $\vec \epsilon$, covariance $\sigma^2_{\text{cn}} \mat I_{N_r}$, and zero pseudo-covariance; $\sigma^2_{\text{cn}}$ is the assumed-known power of the channel noise. The SNR of the system is therefore defined as $10\log_{10}(P_t/\sigma^2_{\text{cn}})$. As is frequently observed in practice, the convergence of PF-SLTR-A is difficult to ensure, especially when the path gain $\beta_{k, 0}$ is significantly time-varying across different $k$'s or when scatterers exist, which makes the likelihoods in \eqref{eq:likelihood-PF-SLTR-A-pulsed} and \eqref{eq:likelihood-PF-SLTR-A-CW} largely unreliable.

    \item RBPF-SLTR-A: The Rao-Blackwellized particle filtering (RBPF) \cite{schon2005marginalized} for signal-level target tracking (SLTR) that augments (A) the real and imaginary components of the complex gain $\beta_{k,0}$ into the state vector $\rvec x_k$. Since the station transition equation is linear in  $\rvec x_{k-1}$ (due to the adoption of the CV model) and the measurement equation \eqref{eq:general-radar-measurement} [cf. \eqref{eq:general-radar-measurement-compact-discret} and \eqref{eq:general-radar-measurement-compact-freq}] is linear in $\beta_{k,0}$, the RBPF that marginalizes out $\beta_{k,0}$ using a Kalman filter can improve the statistical and computational efficiency, compared to the traditional PF-SLTR-A; see \cite[Sec. II-G]{cappe2007overview}. 
\end{itemize}
The performance of each method is measured by both the running times and the target-tracking mean-squared errors (MSEs). The experiments were performed in MATLAB R2023a on an HP desktop with a 12th Gen Intel$^\copyright$ Core$^\text{\tiny TM}$ i7-12700K processor (3.60 GHz), 64 GB of RAM, and a 64-bit operating system. All the source data and codes are available online at GitHub: \url{https://github.com/Spratm-Asleaf/PF-SLTR-MIMO-OFDM}.



We consider a continuous-wave MIMO-OFDM ISAC system with center frequency $f_c = 10$GHz for tracking a near and slow ToI. (Under the pulsed case for far and fast targets, the empirical claims in this subsection remain the same, which can be verified using shared source codes on GitHub.) The ToI is moving along a curved trajectory; see Fig. \ref{fig:CW_trajectory}. In this case, the communication-centric waveform structure in Fig. \ref{fig:OFDM_ISAC_frame}(c) is used, and the cost function in \eqref{eq:cost-func-freq-domain-AF-2} is employed in PF-SLTR to track the ToI. On the communications side, we suppose that each OFDM subcarrier contains random symbols uniformly from the quadrature amplitude modulation (QAM) constellation of order $64$.

\begin{figure}[!htbp]
    \centering

    \subfigure[Trajectory]{
        \centering
        \includegraphics[height=3cm]{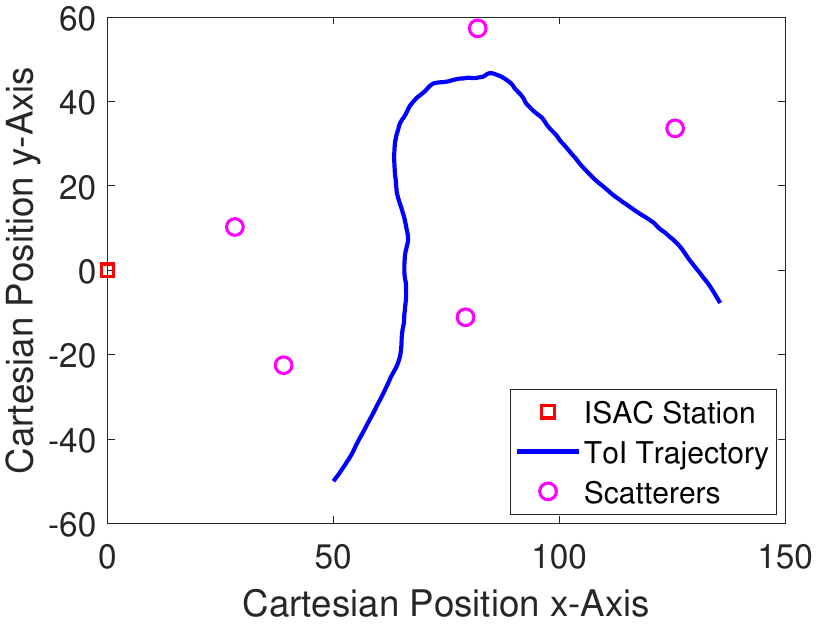}
    }
    \subfigure[Range]{
        \centering
        \includegraphics[height=3cm]{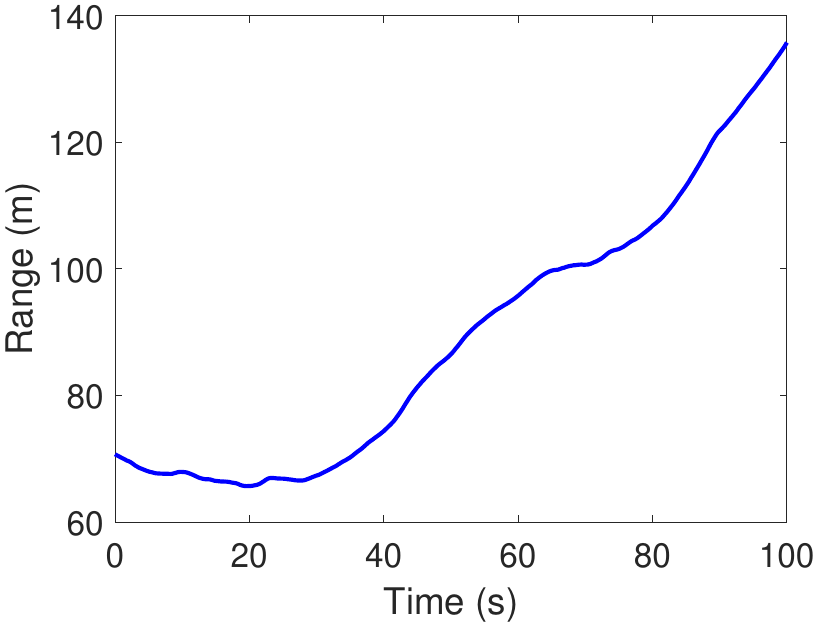}
    }

    \subfigure[Radial Velocity]{
        \centering
        \includegraphics[height=3cm]{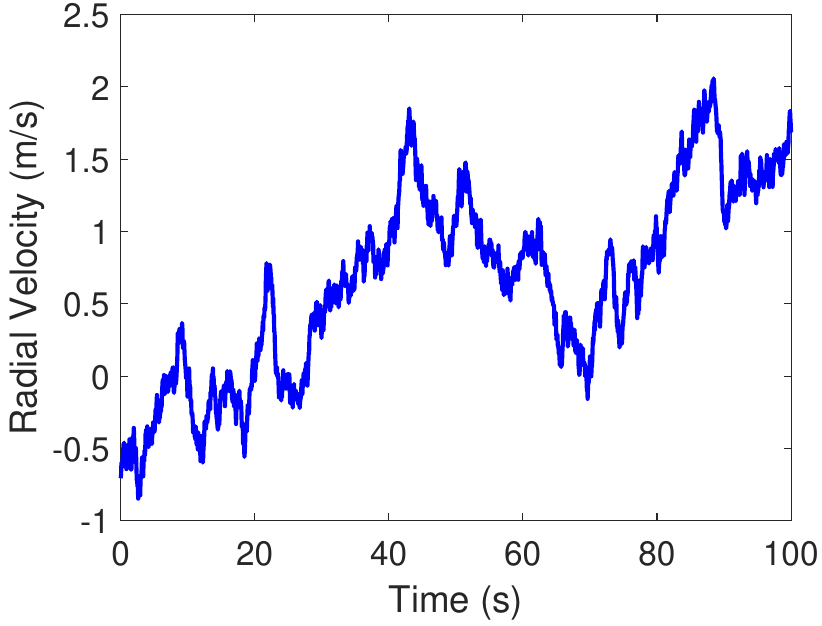}
    }
    \subfigure[DoA]{
        \centering
        \includegraphics[height=3cm]{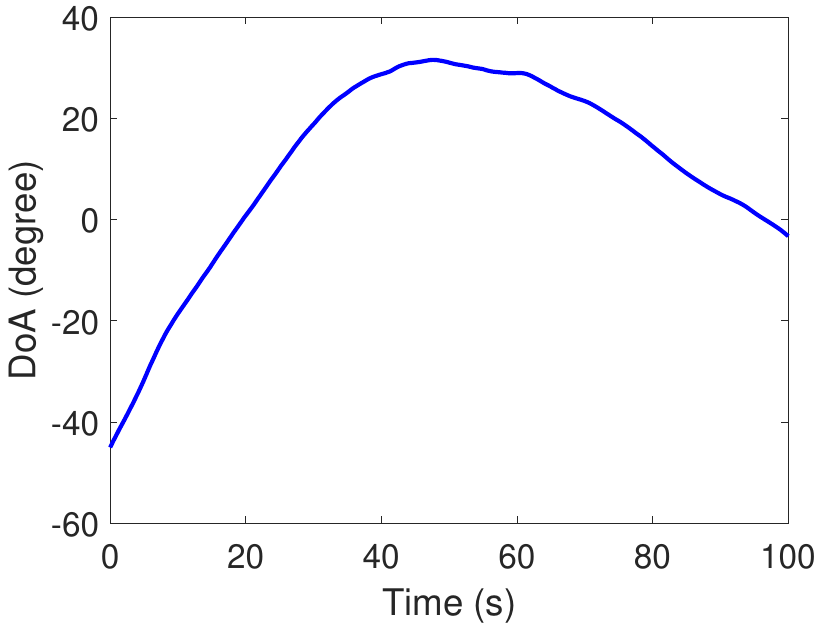}
    }
    
    \caption{Tracking the maneuvering ToI using CW MIMO-OFDM ISAC.}
    \label{fig:CW_trajectory}
\end{figure}

On the sensing side, the tracking time period is set to $T_t = 50$ms, and the total tracking time is $100$s. To cover the maximum range of $150$m, we let the CP duration be $T_c = 1 \mu$s. Suppose that the OFDM symbol duration (for data payloads) is five times of the CP length (i.e., $T_o = 5 \mu$s), the number of OFDM symbols in a pulse is $M=1$, the number of subcarriers is $N_c = 256$, the number of transmit antennas is $N_t = 64$, the number of receive antennas is $N_r = 64$, and the number of pulses in a coherent processing interval is $N_p = 1$. Since it is sufficient to use only the range and DoA information to track the ToI, to reduce computational burdens, we do not conduct Doppler processing. Hence, it is safe to set $N_p = 1$. As a result, the pulse duration is $T_p = M(T_c + T_o) = 6\mu$s, the pulse repetition interval is $T_r = T_p = 6\mu$s, the coherent processing interval is $T_i = N_p T_r = 6\mu$s, the subcarrier spacing is $\Delta_f = 1/T_o = 0.2$MHz, the bandwidth is $B = N_c \Delta_f = 51.2$MHz, the sampling time interval and the range resolution is $T_s = 1/B = 0.019531\mu$s, the resolution of $\sin$-DoA (not DoA itself) is $2/64 = 1/32$. Let the SNR be $-10$dB, which is a practical and challenging case. In addition, we let the path gain of the ToI, at $k$, be
\begin{equation}\label{eq:path-gain}
\beta_{k, 0} = [0.9 + 0.1 \times (2 \rscl r_{k,1} - 1)] e^{j 2\pi \rscl r_{k,2}},
\end{equation}
where the random variable $\rscl r_{k,1}$ is distributed according to the uniform distribution on $[0, 1]$ and the random variable $\rscl r_{k,2}$ is distributed according to the standard Gaussian distribution $\cal N(0, 1)$; this formula is applicable for fast-fading channels. Hence, the ToI path gain is significantly time-varying across different tracking times $k$. The number of particles is set to $N_{\text{par}} = 200$, the effective particle number used in particle resampling is $N_{\text{eff}} = N_{\text{par}} / 2$, and the power coefficient in \eqref{eq:par-representation-posterior-weight} is $\xi = 1$. For other minor experimental settings, see the shared source codes at GitHub.

The tracking results for the 2-dimensional (2D) Cartesian positions are shown in Table \ref{tab:scenario-cw}; the 2D position MSEs are averaged over all different $k$'s and independent Monte-Carlo (MC) trials, so are the running times (Time); however, we found that the empirical performance variations are not obvious across different MC episodes because MSEs are averaged on a long time trajectory indexed by $k$. From Table \ref{tab:scenario-cw}, we can see that PF-SLTR is \textit{significantly} better than PF-ILTR, PF-SLTR-A, and RBPF-SLTR-A, in terms of both computation and accuracy. This is because PF-ILTR must consume a huge amount of time to obtain the raw measurements of the ToI parameters $(\tau_0, \nu_0, \theta_0)$. Within a tracking time window of $T_t = 50$ms, spending $814.1$ms on computations is operationally intolerable because target tracking is a real-time signal processing problem. In contrast, the proposed PF-SLTR does not require this operation. In addition, the convergence of PF-SLTR-A and RBPF-SLTR-A is difficult to ensure for the following two reasons: First, it is challenging to accurately track the quickly-jumping path gain $\beta_{k, 0}$ in \eqref{eq:path-gain};\footnote{The random process $\{\beta_{k, 0}\}_k$ is independent and identically distributed over time $k$, and there is no correlation between $\beta_{k-1, 0}$ and $\beta_{k, 0}$. However, in target tracking algorithms, such as the random walk and CV models, the first-order Markov property of $\{\beta_{k, 0}\}_k$ must be assumed as in \eqref{eq:target-trcking}, which, on the contrary, misleads the tracking filter.} second, the likelihood function in \eqref{eq:likelihood-PF-SLTR-A-pulsed} and \eqref{eq:likelihood-PF-SLTR-A-CW} is not sufficiently precise due to the presence of unknown scatterers. However, the proposed PF-SLTR in Algorithm \ref{algo:PF-SLTR} does not require estimating $\beta_{k,0}$ or relying on the exact likelihood function, resulting in superior performance.

\begin{table}[!htbp]
\centering
\caption{Tracking results for the scenario in Fig. \ref{fig:CW_trajectory}}
\label{tab:scenario-cw}
\begin{tabular}{lcccc}
\bottomrule[0.8pt]
                        & \textbf{PF-SLTR}   & PF-ILTR     & PF-SLTR-A      & {RBPF-SLTR-A}\\
\specialrule{0.7pt}{0pt}{0pt}
MSE      & \textbf{0.0062}    & 2.4620      & (Diverged)     & (Diverged)\\
\hline
Time         & \textbf{23.7}      & 814.1       & 30.2           & 30.0\\
\toprule[0.8pt]
\multicolumn{5}{l}{
Note: Unit for MSE: m$^2$; unit for Time: ms.
}
\end{tabular}
\end{table}

In the following, we present the tracking results of PF-SLTR under different parameters. Results of other filters are not displayed because, as we can see, they are significantly dominated by PF-SLTR. When we vary one parameter, we keep the other the same as those used in Table \ref{tab:scenario-cw}. 

The tracking results for different numbers of particles are given in Table \ref{tab:scenario-cw-par}. As expected, in particle filtering, due to the law of large numbers, the larger the number of particles, the less the average tracking error. However, as the number $N_{\text{par}}$ of particles increases, more computing resources are required.

\begin{table}[!htbp]
\centering
\caption{Tracking results of PF-SLTR when \captext{$N_{\text{par}}$} varies}
\label{tab:scenario-cw-par}
\begin{tabular}{lcccc}
\bottomrule[0.8pt]
$N_{\text{par}}$        & 50   & 100     &  150  & \textbf{200}\\
\specialrule{0.7pt}{0pt}{0pt}
MSE (unit: m$^2$)       & 0.0432    & 0.0129      & 0.0082   & \textbf{0.0062} \\
\hline
Time (unit: ms)         & 6.2      & 12.4       & 18.3  & \textbf{23.7} \\
\toprule[0.8pt]
\multicolumn{5}{l}{
Note: Boldfaced values are baselines in Table \ref{tab:scenario-cw}.
}
\end{tabular}
\end{table}

The tracking results for different SNRs are given in Table \ref{tab:scenario-cw-snr}. As indicated, the proposed method is relatively insensitive to the value of SNR because a wide bandwidth (i.e., $256$ subcarriers with the spacing of 0.2MHz) and a large antenna aperture (i.e., $64$ receive and transmit antennas) are used in this experiment. However, the larger value of SNR indeed helps decrease the average tracking error, although moderately.

\begin{table}[!htbp]
\centering
\caption{Tracking results of PF-SLTR when SNR varies}
\label{tab:scenario-cw-snr}
\begin{tabular}{lcccc}
\bottomrule[0.8pt]
SNR        & -20   & \textbf{-10}     &  0  & 10\\
\specialrule{0.7pt}{0pt}{0pt}
MSE (unit: m$^2$)       & 0.0068    & \textbf{0.0062}     & 0.0060   & 0.0057 \\
\hline
Time (unit: ms)         & 23.4      & \textbf{23.7}       & 23.7  & 24.1 \\
\toprule[0.8pt]
\multicolumn{5}{l}{
Note: Boldfaced values are baselines in Table \ref{tab:scenario-cw}.
}
\end{tabular}
\end{table}

The tracking results for different numbers of antennas are given in Table \ref{tab:scenario-cw-tx}. As we can anticipate, the larger the number of antennas is, the smaller the average tracking error is, and the more the computational burden is.

\begin{table}[!htbp]
\centering
\caption{Tracking results of PF-SLTR when $N_t$ and $N_r$ vary}
\label{tab:scenario-cw-tx}
\begin{tabular}{lcccc}
\bottomrule[0.8pt]
$N_t = N_r =$        & 8   & 16     &  32  & \textbf{64}\\
\specialrule{0.7pt}{0pt}{0pt}
MSE (unit: m$^2$)       & 407.6502    & 0.0245     & 0.0085   & \textbf{0.0062}\\
\hline
Time (unit: ms)         & 6.6      & 9.0       & 14.2  & \textbf{23.7} \\
\toprule[0.8pt]
\multicolumn{5}{l}{
Note: Boldfaced values are baselines in Table \ref{tab:scenario-cw}.
}
\end{tabular}
\end{table}

The tracking results for different numbers of subcarriers are given in Table \ref{tab:scenario-cw-sub}. It suggests that, given subcarrier spacing, the more the number of subcarriers (i.e., the larger the bandwidth is), the smaller the average tracking error. However, this performance improvement comes with an increase in computational times.

\begin{table}[!htbp]
\centering
\caption{Tracking results of PF-SLTR when $N_c$ varies}
\label{tab:scenario-cw-sub}
\begin{tabular}{lcccc}
\bottomrule[0.8pt]
$N_c$        & 32   & 64     &  128  & \textbf{256}\\
\specialrule{0.7pt}{0pt}{0pt}
MSE (unit: m$^2$)       & 0.2975    & 0.0559     & 0.0197   & \textbf{0.0062}\\
\hline
Time (unit: ms)         & 8.5      & 11.5      & 18.3  & \textbf{23.7} \\
\toprule[0.8pt]
\multicolumn{5}{l}{
Note: Boldfaced values are baselines in Table \ref{tab:scenario-cw}.
}
\end{tabular}
\end{table}

The tracking results for different values of $\xi$ in \eqref{eq:par-representation-posterior-weight} are given in Table \ref{tab:scenario-cw-xi}. We do not report the running times because the value of $\xi$ does not essentially affect the computational complexity. The results show that the value of $\xi$, which balances the relative importance of the $\vec y_k$-data-evidence compared to the prior belief [cf. \eqref{eq:bayes-rule-opt-view-h}], can potentially further improve the performance of PF-SLTR; see also Theorem \ref{thm:role-xi}. However, the empirically best value of $\xi$ is problem-specific and tuned case by case. For the specific scenario in Fig. \ref{fig:CW_trajectory}, we empirically found that the larger the value of $\xi$, the better the MSE performance; however, this may not be a general observation for other cases.

\begin{table}[!htbp]
\centering
\caption{Tracking results of PF-SLTR when $\xi$ varies}
\label{tab:scenario-cw-xi}
\begin{tabular}{lcccccccc}
\bottomrule[0.8pt]
$\xi$     & 0.50       & 0.75          & \textbf{1}       & 1.25        & 1.50 \\
\specialrule{0.7pt}{0pt}{0pt}
MSE (unit: m$^2$)       & 0.0091     & 0.0080        &  \textbf{0.0062} & 0.0054       &  0.0047 \\
\hline
\specialrule{0.7pt}{0pt}{0pt}
$\xi$     & 1.75        & 2.00          & 2.25             & 2.50         & 2.75  \\
\specialrule{0.7pt}{0pt}{0pt}
MSE (unit: m$^2$)       & 0.0041     & 0.0036        & 0.0034           & 0.0034       &  0.0031    \\
\toprule[0.8pt]
\multicolumn{6}{l}{
Note: Boldfaced values are baselines in Table \ref{tab:scenario-cw}.
}
\end{tabular}
\end{table}



\section{Conclusions}\label{sec:conclusion}
This article proposes a new particle filtering framework to address core challenges in signal-level target tracking using MIMO multi-carrier pulse-Doppler radar. By adopting an optimization-centric interpretation of Bayes’ rule, the proposed method eliminates the need for explicit likelihood models and avoids augmenting nuisance parameters (e.g., complex path gains) into the state vector. Instead, the filtering process is driven by evaluating the data evidence of each hypothesized target state using domain-tailored cost functions. This shift achieves significant performance improvements in both running times and tracking accuracy, relative to conventional approaches. Moreover, the proposed framework is adapted for implementation in MIMO-OFDM systems, enabling practical integration with existing communication infrastructures to realize joint sensing and communication capabilities. Both the pulsed paradigm for tracking far and fast targets and the continuous-wave scheme for tracking near and slow targets are discussed. Simulation results confirm the effectiveness of the proposed approaches in diverse MIMO-OFDM ISAC configurations, demonstrating the real-time feasibility and reliable tracking performance under unknown likelihood models.

\textit{Closing Notes}: To further extend the applicability of the proposed filtering framework to coordinated multi-point systems, a principled fusion strategy is introduced to aggregate particle-based posterior state distributions across distributed sensing stations. This strategy leverages the maximum entropy law and statistical optimality in the fusion of discrete distributions, which helps preserve particle diversity and mitigate particle degeneracy. For better readability and due to the page limit, see Appendix \ref{subsec:multi-point}.

\section*{Acknowledgments}
The authors thank Prof. Simon J. Godsill of the University of Cambridge for his constructive comments on incorporating the Rao–Blackwellized particle filter into the discussions of this article.

\appendices
\section{Coordinated Multi-Point Systems}\label{subsec:multi-point}

\subsection{Motivations}
To improve throughput and reliability in wireless communications, as well as accuracy and trustworthiness in wireless sensing, multi-point access is emerging as a potential solution 
\cite{li2022multi,meng2024cooperative}. 
To enable the use of the proposed PF-SLTR in coordinated multi-point (i.e., multi-station, radar network) systems, 
an optimization-centric fusion strategy of posterior state distributions from multiple sensing stations is studied. The features of this fusion strategy are as follows:
\begin{itemize}
    \item By using advanced statistical metrics of distributions, the new fusion strategy allows the fusion of a set of discrete distributions that have different supports, in contrast to the canonical fusion rule for common-support discrete distributions 
    \cite{koliander2022fusion}.

    \item By sampling particles from a larger region and maximizing the spread of the particle weights, compared to the single-station case, the new fusion strategy improves the diversity of the particles and the effectiveness of the associated weights, so that the impoverishment and degeneracy problems in particle filtering can be alleviated.
\end{itemize}

\subsection{Principles and Methods}
This subsection studies the PF-SLTR for the multi-point ISAC systems, where each station first independently tracks the target and then all stations fuse the tracking results. This information fusion operation can significantly improve the accuracy of PF-SLTR.

Suppose that we have $Z$ ISAC stations, indexed by $z \in [Z]$. Each station governs a posterior particle-represented ToI-parameter distribution
\begin{equation}\label{eq:q_z}
    q_z(\vec x | \vec y) = \sum^{N_{\text{par}} - 1}_{i = 0} u_{z,i} \delta(\vec x - \vec x_{z,i}),~~~\forall z \in [Z].
\end{equation}
We aim to fuse these $Z$ distributions to form an integrated $N_{\text{par}}$-atom discrete posterior distribution $q(\vec x | \vec y)$:
\begin{equation}\label{eq:fused-z}
    q(\vec x | \vec y) = \sum^{N_{\text{par}} - 1}_{i = 0} u_{i} \delta(\vec x - \vec x_{i}) \defeq \displaystyle \fuse_{z \in [Z]} q_z(\vec x | \vec y),
\end{equation}
where $\{u_i\}_{\forall i \in [N_{\text{par}}]}$ is the fused weights and $\{\vec x_i\}_{\forall i \in N_{\text{par}}}$ is the fused particles. Suppose that the minimum covering region of $q(\vec x | \vec y)$ and $\{q_z(\vec x | \vec y)\}_{\forall z \in [Z]}$ is $\cal X$; i.e., for every $z \in [Z]$ and $i \in [N_{\text{par}}]$, we have $\vec x_{z,i}, \vec x_{i} \in \cal X$. For presentation clarity, hereafter in this subsection, the notational dependence of $q(\vec x | \vec y)$ and $\{q_z(\vec x | \vec y)\}_{\forall z \in [Z]}$ on $\vec y$ is omitted, because the fusion rule in this subsection applies to any collection of common-domain (i.e., $\cal X$) distributions, not limited to posterior distributions only.

To generate $q(\vec x)$, we can randomly draw a set of particles $\{\vec x_i\}_{\forall i \in [N_{\text{par}}]}$ from $\cal X$ (e.g., using stratified sampling of $\{q_z(\vec x)\}_{\forall z \in [Z]}$, uniform sampling on $\cal X$, etc.), and then determine their weights $\{u_i\}_{\forall i \in [N_{\text{par}}]}$. In this importance-sampling procedure, however, the following two rules need to be taken into account:
\begin{itemize}
    \item Particles $\{\vec x_i\}_{\forall i \in [N_{\text{par}}]}$ should be as diverse as possible on $\cal X$; that is, particles should cover the whole space $\cal X$ as much as they can. This rule is to combat the impoverishment issue in particle filtering.

    \item Weights $\{u_i\}_{\forall i \in [N_{\text{par}}]}$ should be as balanced as possible; that is, weights should have as few differences as they can. This rule aims to address the degeneracy issue in particle filtering.
\end{itemize}
The first rule of particle diversity can be easily satisfied in practice using the uniform sampling on $\cal X$. Nevertheless, the second rule of weight balancedness needs nontrivial treatment. A natural quantitative measure of balancedness on the probability simplex $\{u_i\}_{\forall i \in [N_{\text{par}}]}$ can be its entropy value; the larger the entropy value of a discrete distribution is, the more balanced this distribution is. In addition, considering that the $Z+1$ discrete distributions $q(\vec x)$ and $\{q_z(\vec x)\}_{\forall z \in [Z]}$ have different support atoms, this article proposes the following optimization-centric fusion rule:
\begin{equation}\label{eq:fuse-rule}
    \begin{array}{cl}
         \displaystyle \min_{g(\vec x)} &  \displaystyle - \int_{\cal X} - g(\vec x) \ln g(\vec x) \d \vec x + \kappa \sum^{Z-1}_{z = 0} \rho (g(\vec x), q_z(\vec x)) \\
          \st & \displaystyle \int_{\cal X} g(\vec x) \d \vec x = 1,
    \end{array}
\end{equation}
which finds the maximum entropy (i.e., maximum spread) distribution that is simultaneously close to all $\{q_z(\vec x)\}_{\forall z \in [Z]}$; $\rho (g(\vec x), q_z(\vec x))$ defines the Wasserstein distance between $g(\vec x)$ and $q_z(\vec x)$ 
\cite[Definition~6.1]{villani2009optimal}; 
$\kappa \ge 0$ is a tradeoff parameter to compromise between the spread of $g(\vec x)$ and the overall closeness to all $\{q_z(\vec x)\}_{\forall z \in [Z]}$. Note that the larger the entropy value of a distribution, the more this distribution spreads over its domain, that is, the more diverse the realizations drawn from this distribution. Let $g^\star(\vec x)$ solve Problem \eqref{eq:fuse-rule}. The weights $\{u_i\}_{\forall i \in [N_{\text{par}}]}$ of the particles $\{\vec x_i\}_{\forall i \in [N_{\text{par}}]}$ can be determined by the principle of importance sampling, as follows
\begin{equation}\label{eq:weights}
    u_i \propto g^\star(\vec x_i),~~~\forall i \in [N_{\text{par}}].
\end{equation}
The benefits of employing \eqref{eq:fuse-rule} and \eqref{eq:weights} are twofold:
\begin{itemize}
    \item The distributions $q(\vec x)$ and $\{q_z(\vec x)\}_{\forall z \in [Z]}$ on $\cal X$ can have (significantly) different support atoms and weights;

    \item The weights $\{u_i\}_{\forall i \in [N_{\text{par}}]}$ can be as balanced as possible.
\end{itemize}

The last technical step is to efficiently solve Problem \eqref{eq:fuse-rule}, whose solution is given in the theorem below.
\begin{theorem}\label{thm:fused-g(x)}
    Suppose that the Wasserstein distance is of order 1 and induced by the norm $\|\cdot\|$ on $\cal X$. The optimal distribution $g^\star(\vec x)$ solving \eqref{eq:fuse-rule} is given by
    \begin{equation}\label{eq:fused-g(x)}
        g^\star(\vec x) = \exp\left\{- 1 - v^\star -\kappa \sum^{Z-1}_{z = 0} \min_{i \in [N_{\text{par}}]} \Big\{ \|\vec x - \vec x_{z, i}\| - \gamma^\star_{z, i} \Big\}\right\}
    \end{equation}
    where $v^\star$ and $\{\gamma^\star_{z, i}\}_{\forall z \in [Z], i \in [N_{\text{par}}]}$ solve the following unconstrained convex program
    \begin{equation}\label{eq:dual-var}
        \begin{array}{l}
            \displaystyle \max_{v, \{\gamma_{z, i}\}} \displaystyle \kappa \sum^{Z-1}_{z=0} \sum^{N_{\text{par}} - 1}_{i = 0} u_{z, i} \gamma_{z, i} - v - \\
             ~\displaystyle \mathlarger{\int}_{\cal X} \exp\left\{- 1 - v -\kappa \sum^{Z-1}_{z = 0} \min_{i \in [N_{\text{par}}]} \Big\{ \|\vec x - \vec x_{z, i}\| - \gamma_{z, i} \Big\}\right\} \d \vec x.
        \end{array}
    \end{equation}
\end{theorem}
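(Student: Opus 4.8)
The plan is to treat \eqref{eq:fuse-rule} as a convex program in the density $g$ and to recover both $g^\star$ and the dual program \eqref{eq:dual-var} by Lagrangian duality. The objective splits into the negative differential entropy $\int_{\cal X} g \ln g \, \d \vec x$, which is convex in $g$, and the terms $\kappa \sum_z \rho(g, q_z)$, each convex in $g$ because the Wasserstein distance is jointly convex in its arguments; the sole constraint $\int_{\cal X} g \, \d \vec x = 1$ is linear. Strong duality is therefore expected to hold, so I would attach a single scalar multiplier $v$ to the normalization constraint and dualize each transport term, reducing the infinite-dimensional problem over $g$ to a finite-dimensional maximization over $v$ and the transport potentials.

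First I would apply Kantorovich--Rubinstein duality to each order-$1$ Wasserstein distance, writing $\rho(g, q_z) = \sup_{\phi_z}[\int_{\cal X} \phi_z g \, \d \vec x - \int_{\cal X} \phi_z \, \d q_z]$ over $1$-Lipschitz $\phi_z$ with respect to $\|\cdot\|$. The crucial simplification comes from the discreteness of $q_z$ in \eqref{eq:q_z}: since $\int_{\cal X} \phi_z \, \d q_z = \sum_i u_{z,i} \phi_z(\vec x_{z,i})$, only the values of $\phi_z$ at the atoms $\vec x_{z,i}$ enter the penalty, so I introduce finite-dimensional variables $\gamma_{z,i} \defeq -\phi_z(\vec x_{z,i})$. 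For fixed boundary values and $g \ge 0$, the integral $\int_{\cal X} \phi_z g \, \d \vec x$ is maximized by the largest $1$-Lipschitz function consistent with them, namely the McShane--Whitney (inf-convolution) extension $\phi_z(\vec x) = \min_{i} \{\|\vec x - \vec x_{z,i}\| - \gamma_{z,i}\}$, which is precisely the term appearing in \eqref{eq:fused-g(x)} and \eqref{eq:dual-var}. This converts $\sum_z \rho(g, q_z)$ into a supremum over $\{\gamma_{z,i}\}$ whose penalty reads $+\kappa \sum_z \sum_i u_{z,i} \gamma_{z,i}$.

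Next I would interchange the outer $\inf_g$ with the $\sup$ over the transport potentials --- the inner problem being linear, hence concave, in the potentials and convex in $g$ over convex feasible sets --- and solve the inner minimization over $g$ by the calculus of variations. Forming the Lagrangian $\int_{\cal X} g \ln g \, \d \vec x + \kappa \sum_z \int_{\cal X} \phi_z g \, \d \vec x + v(\int_{\cal X} g \, \d \vec x - 1)$ and setting its first variation to zero gives $\ln g(\vec x) + 1 + v + \kappa \sum_z \phi_z(\vec x) = 0$, i.e. the Gibbs form $g^\star(\vec x) = \exp\{-1 - v - \kappa \sum_z \min_i \{\|\vec x - \vec x_{z,i}\| - \gamma_{z,i}\}\}$ of \eqref{eq:fused-g(x)}. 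Substituting $g^\star$ back collapses the entropy and coupling contributions, leaving the dual objective $-\int_{\cal X} g^\star \, \d \vec x - v$; adding the penalty $\kappa \sum_z \sum_i u_{z,i} \gamma_{z,i}$ yields exactly \eqref{eq:dual-var}, whose concavity in $(v, \{\gamma_{z,i}\})$ follows because the exponent is convex (a linear term in $v$ plus $-\kappa$ times a minimum of affine functions of the $\gamma_{z,i}$).

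The main obstacle I anticipate is rigorously justifying the $\inf$--$\sup$ interchange and the attainment of the optimal potentials. A clean route is Sion's minimax theorem, which requires identifying suitable convex constraint sets and the needed semicontinuity; alternatively one can verify the Karush--Kuhn--Tucker conditions directly, checking that $g^\star$ is a genuine probability density --- in particular that the normalizing integral $\int_{\cal X} g^\star \, \d \vec x$ is finite, which uses boundedness of $\min_i \{\|\vec x - \vec x_{z,i}\| - \gamma_{z,i}\}$ on the compact covering region $\cal X$ --- and that complementary slackness holds. The remaining measure-theoretic bookkeeping, namely existence of optimal couplings and validity of the $c$-transform representation for discrete targets, is standard and can be invoked from optimal-transport theory.
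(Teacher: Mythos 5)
Your proposal is correct and follows essentially the same route as the paper: dualize each order-$1$ Wasserstein term into a finite-dimensional maximization over potentials $\{\gamma_{z,i}\}$, form the Lagrangian with a scalar multiplier $v$ for the normalization constraint, obtain the Gibbs form of $g^\star$ from the first-order variational condition, and substitute back to get the dual program \eqref{eq:dual-var}. The only difference is that you derive the discrete-target Kantorovich--Rubinstein reformulation (via the McShane--Whitney extension) from scratch, whereas the paper simply invokes it as a cited lemma.
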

\begin{proof}
See Appendix \ref{append:fused-g(x)}.
\stp
\end{proof}

Because Problem \eqref{eq:dual-var} is convex and smooth in $v$ and $\gamma_{z, i}$, for every $z \in [Z]$ and $i \in [N_{\text{par}}]$, we can use gradient-based methods to solve it, e.g., gradient descent. The gradient with respect to $v$ is
\begin{equation}\label{eq:grad-v}
    \begin{array}{l}
        \displaystyle -1 + \\
        ~\displaystyle \mathlarger{\int}_{\cal X} \exp\left\{- 1 - v -\kappa \sum^{Z-1}_{z = 0} \min_{i \in [N_{\text{par}}]} \Big\{ \|\vec x - \vec x_{z, i}\| - \gamma_{z, i} \Big\}\right\} \d \vec x,
    \end{array}
\end{equation}
and that with respect to $\gamma_{z, i}$, for every $z \in [Z], i \in [N_{\text{par}}]$, is
\begin{equation}\label{eq:grad-gamma-(z,i)}
    \kappa \cdot  \left\{ u_{z, i} - \mathlarger{\int}_{\cal X_{z, i}} \exp\left[- 1 - v - \kappa \Big( \|\vec x - \vec x_{z, i}\| - \gamma_{z, i} \Big)\right] \d \vec x \right\},
\end{equation}
where $\cal X_{z, i}$, a subset of $\cal X$, is defined as
\begin{equation}
    \begin{array}{l}
        \cal X_{z, i} \defeq \\
         \quad \Big\{\bm x \in \cal X:~\|\bm x - \vec x_{z, i}\| - \gamma_{z,i} \le \|\bm x - \vec x_{z, j}\| - \gamma_{z,j} \Big\}, \\
         \quad \quad \quad \quad \quad \quad \quad \quad \quad \quad \quad \quad \quad \quad \quad \quad ~~~ \forall j \in [N_{\text{par}}], j \ne i.
    \end{array}
\end{equation}
Note that for every $z \in [Z]$, we have 
\[
    \cal X_{z, i} \bigcap \cal X_{z, j} = \emptyset,~~~ \text{if } i \ne j;~~~\text{and}~~~
    \cal X = \displaystyle \bigcup^{N_{\text{par}}-1}_{i= 0} \cal X_{z, i}.
\]
That is, for every $z \in [Z]$, the collection of sets $\{\cal X_{z, i}\}_{\forall i \in [N_{\text{par}}]}$ is a mutually exclusive and collectively exhaustive (MECE) partition of the whole space $\cal X$. The following points regarding the optimization \eqref{eq:dual-var}, together with its gradients in \eqref{eq:grad-v} and \eqref{eq:grad-gamma-(z,i)}, have to be outlined:
\begin{itemize}
    \item Problem \eqref{eq:dual-var} is a large-scale unconstrained optimization in which $1 + Z N_{\text{par}}$ real-valued variables are involved. However, the fusion formulation \eqref{eq:fuse-rule} is only designed for small $N_{\text{par}}$ to combat particle impoverishment and particle degeneracy. When $N_{\text{par}}$ is relatively large, stratified sampling of $\{q_z(\vec x)\}_{\forall z \in [Z]}$ can be practically sufficient to achieve the same goal, because the particles $\{\vec x_{z, i}\}_{\forall z \in [Z], i \in [N_{\text{par}}]}$ are diverse and well-spread on $\cal X$.

    \item When the gradients approach zeros, \eqref{eq:grad-v} implies that $g^\star(\vec x)$ in \eqref{eq:fused-g(x)} is indeed a probability distribution that has a unity integral, whereas \eqref{eq:grad-gamma-(z,i)} means that, for every $z \in [Z]$, the collection of sets $\{\cal X_{z, i}\}_{\forall \in [N_{\text{par}}]}$ is indeed an MECE partition of $\cal X$. Note that the integral of $g^\star(\vec x)$ on the partition $\cal X_{z, i}$ is $u_{z, i}$. 

    \item In gradient evaluations, the integrals in \eqref{eq:grad-v} and \eqref{eq:grad-gamma-(z,i)} can be approximated using numerical methods, e.g., Monte--Carlo integration.
\end{itemize}

\subsection{Computational Complexity}
If the multi-point coordination is adopted in a fusion center, at each tracking time $k$, the fusion center has the following three operations: 
\begin{enumerate}
    \item Uniformly sample $N_{\text{par}}$ integrated particles from the minimum covering region $\cal X$ (of particles from all stations); 
    
    \item Determine the weights of these particles using $g^\star(\vec x)$ in \eqref{eq:fused-g(x)}, where the gradients are given in \eqref{eq:grad-v} and \eqref{eq:grad-gamma-(z,i)};
    
    \item Distribute the fused particles and weights to stations, serving as their prior distributions in the next time step. 
\end{enumerate}
The computational complexity, in time, for this fusion process is $\cal O(N_{\text{gd}} N_{\text{mci}} Z N_{\text{par}} d_x)$, where $N_{\text{gd}}$ is the running steps of gradient descent, and $N_{\text{mci}}$ is the number of Monte-Carlo samples to numerically evaluate the integrals in \eqref{eq:grad-v} and \eqref{eq:grad-gamma-(z,i)}. However, if $N_{\text{par}}$ is relatively large, using stratified sampling of particle-represented $\{q_z(\vec x_k | \vec y_k)\}_{\forall z \in [Z]}$ can be practically effective as a fusion rule to generate integrated particles with uniform weights; the compromise of this practical trick is that there is no explicit optimality guarantee as in \eqref{eq:fuse-rule}. 

\subsection{Experiments}
In this subsection, we show the performance gain from the multi-point coordination in Table \ref{tab:scenario-cw-multi-point}. The MSEs and running times are reported against the number $Z$ of ISAC stations. Indeed, multi-point fusion can enhance tracking performance. However, this improvement comes at the cost of nontrivially increased computational complexity. Note that increasing the number of ISAC stations is not always beneficial, as the fusion gain in MSEs quickly saturates while the computational time grows substantially.
\begin{table}[!htbp]
\centering
\caption{Tracking results of PF-SLTR when $Z$ varies}
\label{tab:scenario-cw-multi-point}
\begin{tabular}{lcccc}
\bottomrule[0.8pt]
$Z$                     & 1   & 2     &  3  & 4 \\
\specialrule{0.7pt}{0pt}{0pt}
MSE (unit: m$^2$)       & \textbf{0.0062}    & 0.0030     & 0.0022   & 0.0019 \\
\hline
Time (unit: ms)         & \textbf{23.7}      & 50.7      & 87.6  &   127.0 \\
\toprule[0.8pt]
\multicolumn{5}{l}{
Note: Boldfaced values are baselines in Table \ref{tab:scenario-cw}.
}
\end{tabular}
\end{table}

\subsection{Proof of Theorem \captext{\ref{thm:fused-g(x)}}}\label{append:fused-g(x)}
\begin{proof}
According to the finite-dimensional reformulation of Wasserstein distance between a continuous distribution $q(\vec x)$ and a discrete distribution $q_z(\vec x)$ 
\cite[Lemma~1]{wang2022distributionally}, 
the Lagrange dual problem of \eqref{eq:fuse-rule} can be written as \eqref{eq:wsx}.
\begin{figure*}[!htbp]
    \centering
    \begin{equation}\label{eq:wsx}
    \begin{array}{l}
    \displaystyle \max_{v} \min_{q(\vec x)} \int q(\vec x) \ln{q(\vec x)} \d \bm x  + 
    
    \displaystyle \kappa \cdot \sum^{Z-1}_{z = 0} \Bigg\{\max_{\{\gamma_{z,i}\}} \Bigg[\int q(\vec x) \displaystyle \min_{i \in [N_{\text{par}}]} \Big\{\|\vec x - \vec x_{z, i}\| - \gamma_{z, i} \Big\} \d \bm x\Bigg] + \Bigg.
    
    \displaystyle \Bigg.\Bigg.\sum^{N_{\text{par}}}_{i=1} u_{z,i} \gamma_{z,i}\Bigg\} +
    
    \displaystyle v \left[\int q(\vec x) d \bm x - 1\right] \\
    = \displaystyle \max_{v} \min_{q(\vec x)} \max_{\{\gamma_{z, i}\}} ~~ \kappa \cdot \sum^{Z-1}_{z = 0} \sum^{N_{\text{par}}}_{i=1} u_{z,i} \gamma_{z,i}  - v + 
    
    \displaystyle \mathlarger{\int} q(\vec x) \Bigg[ \ln{q(\vec x)} + \Bigg.
    
    \Big. \kappa \cdot \displaystyle \sum^{Z-1}_{z = 0} \Big[\displaystyle \min_{i \in [N_{\text{par}}]} \Big\{\|\vec x - \vec x_{z, i}\| - \gamma_{z, i} \Big\}\Big] + v \Bigg] d\bm x.
    \end{array}
    \end{equation}
    
    
    
    
    
    
    \hrule
\end{figure*}

Since the objective function is convex and constraint-free in terms of $q(\vec x)$, and concave in terms of $\{\gamma_{z, i}\}$, the varitional method can be used to maximize it over $q(\vec x)$, leading to
$$
\ln{q(\vec x)} + \kappa \sum^{Z-1}_{z = 0} \min_{i \in N_{\text{par}}} \Big\{ \|\vec x - \vec x_{z, i}\| - \gamma_{z, i} \Big\} + v  + 1 \equiv 0,
$$
almost everywhere.
This gives the form of $q(\vec x)$ in \eqref{eq:fused-g(x)}. Substituting $q(\vec x)$ back into the objective of the Lagrange dual problem gives \eqref{eq:dual-var}. 

The strong duality holds because the original problem \eqref{eq:fuse-rule} is convex. This completes the proof.
\stp
\end{proof}

\section{Proof of Proposition \captext{\ref{prop:h1}}}\label{append:h1}
\begin{proof}
Solve \eqref{eq:cost-func-time-domain} with respect to $\gamma$, yielding
\begin{equation}\label{eq:h1-simplified-2}
    \begin{array}{l}
    h_1(\vec y, \vec \varphi) = 
    \displaystyle \sum^{N_p - 1}_{p = 0} \sum^{L - 1}_{l = 0} \vec y^\H_p(l) \vec y_p(l) \quad - \\ 
            \quad \frac{
            \displaystyle
            \left| \sum^{N_p - 1}_{p = 0} \sum^{L - 1}_{l = 0} \vec y^\H_p(l) \vec b(\theta) \vec a^\H(\theta) \vec s_p(l - \tau) e^{j 2 \pi p T_r \nu} \right|^2
            }{\displaystyle
            \sum^{N_p - 1}_{p = 0} \sum^{L - 1}_{l = 0} \vec a^\H(\theta) \vec s_p(l - \tau) \vec s^\H_p(l - \tau) \vec a(\theta) \vec b^\H(\theta) \vec b(\theta)
            } \\
    \quad \ge 0.
    \end{array}
\end{equation}
The denominator in the second line equals 
\[
    N_r N_p \vec a^\H(\theta) \sum^{L_{\text{ss}} - 1}_{l = 0}  \vec s_p(l) \vec s^\H_p(l) \vec a(\theta) = N_r N_p L_{\text{ss}} \vec a^\H(\theta) \math R_s \vec a(\theta).
\]
Note that $L_{\text{ss}} \le L$ and $\vec s_p(l) = \vec 0$ if $L_{\text{ss}} \le l \le L-1$, for every $p \in [N_p]$. This completes the proof. \stp
\end{proof}

\section{Proof of Proposition \captext{\ref{prop:h4-property}}}\label{append:h4-property}
\begin{proof}
The frequency-domain signal model \eqref{eq:general-radar-measurement-compact-freq} can be written as
\[
    \begin{array}{cl}
        \rvecb y_p(n) &= \beta_0 \vec b(\theta_0) \vec a^\H(\theta_0) \rvecb s_p(n) e^{-j 2 \pi n \Delta \tau_0} e^{j 2 \pi p T_r \nu_0} + \\ 
        & \quad \quad \quad \quad \rvecb n_{p}(n),
    \end{array}
\]
where $\rvecb n_{p}(n)$ denotes the pure (i.e., scatterer-free) channel noise; $\rvecb n_{p}(n)$ is uncorrelated with 
$\rvecb s_p(n)$ for every $n \in [N_{\text{cr}}]$ and $p \in [N_p]$. Hence, we have \eqref{eq:tracking-gate-1}-\eqref{eq:tracking-gate-3}.

The first nulls of the Dirichlet function, or the periodic sinc function, in \eqref{eq:tracking-gate-3} is given by the following conditions in each independent dimension: $\frac{N_t}{2} (\sin{\theta_0} - \sin{\theta}) = \pm 1$, $\frac{N_r}{2} (\sin{\theta_0} - \sin{\theta}) = \pm 1$, $B (\tau_0 - \tau) = \pm 1$, and $T_i (\nu - \nu_0) = \pm 1$. Therefore, within the tracking gate
\[
\begin{array}{l}
    \displaystyle \Big[\tau_0 - \frac{1}{B}, \tau_0 + \frac{1}{B}\Big] \times \Big[\nu_0 - \frac{1}{T_i}, \nu_0 + \frac{1}{T_i}\Big] \times \\
    
    \quad \displaystyle  \Big[\arcsin\Big(\sin \theta_0 - \frac{2}{N_{tr}}\Big), \arcsin\Big(\sin \theta_0 + \frac{2}{N_{tr}}\Big)\Big],
\end{array}
\]
the Dirichlet function is nonzero and satisfies Rule \ref{rule:cost-func-h}. This completes the proof. \stp

\begin{figure*}[!htbp]
\begin{equation}\label{eq:tracking-gate-1}
    \begin{array}{cl}
        \displaystyle \rvecb y^\H_p(n) \vec b(\theta) \vec a^\H(\theta) \rvecb s_p(n) e^{-j 2 \pi n \Delta \tau} e^{j 2 \pi p T_r \nu} &= \beta^\H_0 \vec b^\H(\theta) \vec b(\theta_0) \vec a^\H(\theta_0) \rvecb s_p(n) \rvecb s^\H_p(n) \vec a(\theta) e^{j 2 \pi n \Delta (\tau_0 - \tau)} e^{j 2 \pi p T_r (\nu - \nu_0)} + \\ 
        
        & \quad \quad \quad \quad \rvecb n^\H_{p}(n) \vec b(\theta) \vec a^\H(\theta) \rvecb s_p(n) e^{-j 2 \pi n \Delta \tau} e^{j 2 \pi p T_r \nu}
    \end{array} 
\end{equation}
\hrule
\begin{equation}\label{eq:tracking-gate-2}
    \begin{array}{l}
        \displaystyle \E_{\{\rvecb y_p(n), \rvecb s_p(n)\}_{\forall p, n}} \left\{ \sum^{N_p - 1}_{p = 0} \sum^{N_{\text{cr}} - 1}_{n = 0} \Big[\rvecb y^\H_p(n) \vec b(\theta) \vec a^\H(\theta) \rvecb s_p(n) e^{-j 2 \pi n \Delta \tau} e^{j 2 \pi p T_r \nu} \Big] \right\} \\
        
        \quad \quad =  \displaystyle \beta^\H_0 \vec b^\H(\theta) \vec b(\theta_0) \vec a^\H(\theta_0) \mat R_s \vec a(\theta) \sum^{N_{\text{cr}} - 1}_{n = 0} e^{j 2 \pi n \Delta (\tau_0 - \tau)} \sum^{N_p - 1}_{p = 0} e^{j 2 \pi p T_r (\nu - \nu_0)} + \\

        \quad \quad \quad \quad \quad \quad \quad \quad  \displaystyle \E_{\{\rvecb n_p(n), \rvecb s_p(n)\}_{\forall p, n}} \left\{\vec a^\H(\theta) \sum^{N_p - 1}_{p = 0} \sum^{N_{\text{cr}} - 1}_{n = 0} \rvecb s_p(n) \rvecb n^\H_{p}(n) e^{-j 2 \pi n \Delta \tau} e^{j 2 \pi p T_r \nu} \vec b(\theta) \right\} \\

        \quad \quad = \displaystyle \beta^\H_0 P_t \vec b^\H(\theta) \vec b(\theta_0) \vec a^\H(\theta_0) \vec a(\theta) \sum^{N_{\text{cr}} - 1}_{n = 0} e^{j 2 \pi n \Delta (\tau_0 - \tau)} \sum^{N_p - 1}_{p = 0} e^{j 2 \pi p T_r (\nu - \nu_0)} + 0 \\

        \quad \quad = \displaystyle \beta^\H_0 P_t \sum^{N_t - 1}_{t = 0} e^{j \pi t (\sin{\theta} - \sin{\theta_0})} \sum^{N_r - 1}_{r = 0} e^{j \pi r (\sin{\theta_0} - \sin{\theta})} \sum^{N_{\text{cr}} - 1}_{n = 0} e^{j 2 \pi n \Delta (\tau_0 - \tau)} \sum^{N_p - 1}_{p = 0} e^{j 2 \pi p T_r (\nu - \nu_0)} \\

        \quad \quad = \displaystyle \beta^\H_0 P_t \sum^{N_t - 1}_{t = 0} e^{j 2 \pi \frac{t}{N_t} \frac{N_t}{2} (\sin{\theta} - \sin{\theta_0})} \sum^{N_r - 1}_{r = 0} e^{j 2 \pi \frac{r}{N_r} \frac{N_r}{2} (\sin{\theta_0} - \sin{\theta})} \sum^{N_{\text{cr}} - 1}_{n = 0} e^{j 2 \pi \frac{n}{N_{\text{cr}}} B (\tau_0 - \tau)} \sum^{N_p - 1}_{p = 0} e^{j 2 \pi \frac{p}{N_p} T_i (\nu - \nu_0)} \\
    \end{array} 
\end{equation}
\hrule
\begin{equation}\label{eq:tracking-gate-3}
    \begin{array}{l}
        \displaystyle \big| \eqref{eq:tracking-gate-2} \big| = \displaystyle |\beta_0| \cdot P_t \cdot \left|\frac{\sin[\pi\cdot(\frac{N_t}{2} (\sin{\theta} - \sin{\theta_0}))]}{\sin[\pi\cdot(\frac{N_t}{2} (\sin{\theta} - \sin{\theta_0}))/N_t]}\right| \cdot  \left|\frac{\sin[\pi\cdot(\frac{N_r}{2} (\sin{\theta_0} - \sin{\theta}))]}{\sin[\pi\cdot(\frac{N_r}{2} (\sin{\theta_0} - \sin{\theta}))/N_r]}\right| \\
        
        \quad \quad \quad \quad \quad \quad \quad \quad \bigg|\frac{\sin[\pi\cdot(B (\tau_0 - \tau))]}{\sin[\pi\cdot(B (\tau_0 - \tau))/N_{\text{cr}}]}\bigg| \cdot \bigg|\frac{\sin[\pi\cdot(T_i (\nu - \nu_0))]}{\sin[\pi\cdot(T_i (\nu - \nu_0))/N_p]}\bigg| \\
    \end{array} 
\end{equation}
\hrule
\end{figure*}
\end{proof}

\section{Proof of Theorem \captext{\ref{thm:role-xi}}}\label{append:role-xi}
\begin{proof}
We have
\[
\begin{array}{cl}
\left[- \displaystyle \sum^{N_{\text{par}} - 1}_{i = 0} \alpha^{(\xi)}_i \ln \alpha^{(\xi)}_i\right] &= \displaystyle \sum^{N_{\text{par}} - 1}_{i = 0} - \frac{\alpha^\xi_i}{C_\xi} \ln \frac{\alpha^\xi_i}{C_\xi} \\

&= \displaystyle \sum^{N_{\text{par}} - 1}_{i = 0} - \frac{\alpha^\xi_i}{C_\xi} \ln {\alpha^\xi_i} + \frac{\alpha^\xi_i}{C_\xi} \ln C_{\xi} \\

&= \displaystyle \frac{C_{\xi} \ln C_{\xi}}{C_{\xi} } - \displaystyle \frac{\xi }{C_\xi} \sum^{N_{\text{par}} - 1}_{i = 0} \alpha^\xi_i \ln {\alpha_i}.
\end{array}
\]
The derivative with respect to $\xi$ is
\[
\begin{array}{l}
\displaystyle \frac{\xi}{C^2_\xi} \left\{\left[\sum^{N_{\text{par}} - 1}_{i = 0} \alpha^\xi_i \ln \alpha_i \right]^2 - \left[\sum^{N_{\text{par}} - 1}_{i = 0} \alpha^\xi_i \right]\left[\sum^{N_{\text{par}} - 1}_{i = 0} \alpha^\xi_i \ln \alpha_i \ln \alpha_i \right] \right\} \\
\quad \le 0,
\end{array}
\]
where the inequality is due to the Cauchy inequality, and the equality holds if and only if $\vec \alpha$ is a uniform distribution. Therefore, the entropy function $\xi \mapsto - \sum^{N_{\text{par}} - 1}_{i = 0} \alpha^{(\xi)}_i \ln \alpha^{(\xi)}_i$ is monotonically decreasing in $\xi \ge 0$.  \stp
\end{proof}


\bibliographystyle{IEEEtran}
\bibliography{References}

@book{richards2022fundamentals,
  title={Fundamentals of Radar Signal Processing},
  author={Richards, Mark A.},
  edition={3rd},
  year={2022},
  publisher={McGraw-Hill New York}
}

@article{knoblauch2022optimization,
  title={An optimization-centric view on {Bayes}' rule: Reviewing and generalizing variational inference},
  author={Knoblauch, Jeremias and Jewson, Jack and Damoulas, Theodoros},
  journal={Journal of Machine Learning Research},
  volume={23},
  number={132},
  pages={1--109},
  year={2022}
}

@article{germain2016pac,
  title={{PAC-Bayesian} theory meets {Bayesian} inference},
  author={Germain, Pascal and Bach, Francis and Lacoste, Alexandre and Lacoste-Julien, Simon},
  journal={Advances in Neural Information Processing Systems},
  volume={29},
  year={2016}
}

@article{wang2022distributionally,
  title={Distributionally robust state estimation for nonlinear systems},
  author={Wang, Shixiong},
  journal={IEEE Trans. Signal Processing},
  volume={70},
  pages={4408--4423},
  year={2022},
  publisher={IEEE}
}

@article{arulampalam2002tutorial,
  title={A tutorial on particle filters for online nonlinear/non-{Gaussian} {Bayesian} tracking},
  author={Arulampalam, M Sanjeev and Maskell, Simon and Gordon, Neil and Clapp, Tim},
  journal={IEEE Trans. Signal Processing},
  volume={50},
  number={2},
  pages={174--188},
  year={2002},
  publisher={IEEE}
}

@article{san2007mimo,
  title={{MIMO} radar ambiguity functions},
  author={San Antonio, Geoffrey and Fuhrmann, Daniel R and Robey, Frank C},
  journal={IEEE J. Sel. Topics Signal Processing},
  volume={1},
  number={1},
  pages={167--177},
  year={2007},
  publisher={IEEE}
}

@inbook{michael2013mimo,
    author = {Michael S. Davis},
    year = {2013},
    title = {{MIMO} Radar},
    booktitle = {Principles of Modern Radar: Advanced Techniques},
    chapter = {Chapter 4},
    pages = {119-145},
    doi = {10.1049/SBRA020E_ch4},
    publisher = {SciTech Publishing, Edison, NJ.}
}

@article{liu2023seventy,
  title={Seventy years of radar and communications: The road from separation to integration},
  author={Liu, Fan and Zheng, Le and Cui, Yuanhao and Masouros, Christos and Petropulu, Athina P and Griffiths, Hugh and Eldar, Yonina C},
  journal={IEEE Signal Processing Mag.},
  volume={40},
  number={5},
  pages={106--121},
  year={2023},
  publisher={IEEE}
}

@article{mishra2019toward,
  title={Toward millimeter-wave joint radar communications: A signal processing perspective},
  author={Mishra, Kumar Vijay and Shankar, MR Bhavani and Koivunen, Visa and Ottersten, Bjorn and Vorobyov, Sergiy A},
  journal={IEEE Signal Processing Mag.},
  volume={36},
  number={5},
  pages={100--114},
  year={2019},
  publisher={IEEE}
}

@article{dong2025communication,
  title={Communication-assisted sensing in {6G} networks},
  author={Dong, Fuwang and Liu, Fan and Lu, Shihang and Xiong, Yifeng and Zhang, Qixun and Feng, Zhiyong and Gao, Feifei},
  journal={IEEE J. Sel. Areas in Commun.},
  year={2025},
  volume={43},
  issue={4},
  pages={1371-1386},
  publisher={IEEE}
}

@article{liu2020radar,
  title={Radar-assisted predictive beamforming for vehicular links: Communication served by sensing},
  author={Liu, Fan and Yuan, Weijie and Masouros, Christos and Yuan, Jinhong},
  journal={IEEE Trans. Wireless Commun.},
  volume={19},
  number={11},
  pages={7704--7719},
  year={2020},
  publisher={IEEE}
}

@book{doucet2001sequential,
  title={Sequential Monte Carlo methods in Practice},
  author={Doucet, Arnaud and De Freitas, Nando and Gordon, Neil James and others},
  volume={1},
  number={2},
  year={2001},
  publisher={Springer}
}

@article{snyder2008obstacles,
  title={Obstacles to high-dimensional particle filtering},
  author={Snyder, Chris and Bengtsson, Thomas and Bickel, Peter and Anderson, Jeff},
  journal={Monthly Weather Review},
  volume={136},
  number={12},
  pages={4629--4640},
  year={2008}
}

@article{li2003survey,
  title={Survey of maneuvering target tracking: {Part I}. Dynamic models},
  author={Li, X Rong and Jilkov, Vesselin P},
  journal={IEEE Trans. Aerosp. Electron. Syst.},
  volume={39},
  number={4},
  pages={1333--1364},
  year={2003},
  publisher={IEEE}
}

@article{stoica2007probing,
  title={On probing signal design for {MIMO} radar},
  author={Stoica, Petre and Li, Jian and Xie, Yao},
  journal={IEEE Trans. Signal Processing},
  volume={55},
  number={8},
  pages={4151--4161},
  year={2007},
  publisher={IEEE}
}

@article{lu2014overview,
  title={An overview of massive {MIMO}: Benefits and challenges},
  author={Lu, Lu and Li, Geoffrey Ye and Swindlehurst, A Lee and Ashikhmin, Alexei and Zhang, Rui},
  journal={IEEE J. Sel. Topics Signal Processing},
  volume={8},
  number={5},
  pages={742--758},
  year={2014},
  publisher={IEEE}
}

@article{li1999channel,
  title={Channel estimation for {OFDM} systems with transmitter diversity in mobile wireless channels},
  author={Li, Ye and Seshadri, Nambi and Ariyavisitakul, Sirikiat},
  journal={IEEE J. Sel. Areas in Commun.},
  volume={17},
  number={3},
  pages={461--471},
  year={1999},
  publisher={IEEE}
}

@article{li2002mimo,
  title={{MIMO-OFDM} for wireless communications: Signal detection with enhanced channel estimation},
  author={Li, Y Geoffrey and Winters, Jack H and Sollenberger, Nelson R},
  journal={IEEE Trans. Commun.},
  volume={50},
  number={9},
  pages={1471--1477},
  year={2002},
  publisher={IEEE}
}

@article{xu2008target,
  title={Target detection and parameter estimation for {MIMO} radar systems},
  author={Xu, Luzhou and Li, Jian and Stoica, Petre},
  journal={IEEE Trans. Aerosp. Electron. Syst.},
  volume={44},
  number={3},
  pages={927--939},
  year={2008},
  publisher={IEEE}
}

@article{cui2013mimo,
  title={{MIMO} radar waveform design with constant modulus and similarity constraints},
  author={Cui, Guolong and Li, Hongbin and Rangaswamy, Muralidhar},
  journal={IEEE Trans. Signal Processing},
  volume={62},
  number={2},
  pages={343--353},
  year={2013},
  publisher={IEEE}
}

@article{koivunen2024multicarrier,
  title={Multicarrier {ISAC}: Advances in waveform design, signal processing, and learning under nonidealities},
  author={Koivunen, Visa and Keskin, Musa Furkan and Wymeersch, Henk and Valkama, Mikko and Gonz{\'a}lez-Prelcic, Nuria},
  journal={IEEE Signal Processing Mag.},
  volume={41},
  number={5},
  pages={17--30},
  year={2024},
  publisher={IEEE}
}

@article{sturm2011waveform,
  title={Waveform design and signal processing aspects for fusion of wireless communications and radar sensing},
  author={Sturm, Christian and Wiesbeck, Werner},
  journal={Proc. IEEE},
  volume={99},
  number={7},
  pages={1236--1259},
  year={2011},
  publisher={IEEE}
}

@article{mercier2020comparison,
  title={Comparison of correlation-based {OFDM} radar receivers},
  author={Mercier, Steven and Bidon, St{\'e}phanie and Roque, Damien and Enderli, Cyrille},
  journal={IEEE Trans. Aerosp. Electron. Syst.},
  volume={56},
  number={6},
  pages={4796--4813},
  year={2020},
  publisher={IEEE}
}

@article{keskin2023monostatic,
  title={Monostatic sensing with {OFDM} under phase noise: From mitigation to exploitation},
  author={Keskin, Musa Furkan and Wymeersch, Henk and Koivunen, Visa},
  journal={IEEE Trans. Signal Processing},
  volume={71},
  pages={1363--1378},
  year={2023},
  publisher={IEEE}
}

@article{zhang2025sensing,
  title={Sensing-Assisted Intelligent Transportation System With Adaptive Power Allocation and Automatic Beam Control},
  author={Zhang, Zhibo and Chen, Leyan and Xing, Jin and Liu, Kai and Chang, Qing},
  journal={IEEE Trans. Intell. Transp. Syst.},
  year={2025},
  pages={8820-8833},
  volume={26},
  issue={6},
  publisher={IEEE}
}

@book{stone2013bayesian,
  title={Bayesian Multiple Target Tracking},
  author={Stone, Lawrence D and Streit, Roy L and Corwin, Thomas L and Bell, Kristine L},
  year={2014},
  edition={2},
  publisher={Artech House}
}

@inproceedings{ratpunpairoj2015particle,
  title={A particle filter for objects tracking in cognitive {MIMO} system},
  author={Ratpunpairoj, Paopat and Kongprawechnon, Waree and Fukawa, Kazuhiko and Kaemarungsi, Kamol},
  booktitle={2015 6th International Conference of Information and Communication Technology for Embedded Systems (IC-ICTES)},
  pages={1--4},
  year={2015},
  organization={IEEE}
}

@article{huleihel2013optimal,
  title={Optimal adaptive waveform design for cognitive {MIMO} radar},
  author={Huleihel, Wasim and Tabrikian, Joseph and Shavit, Reuven},
  journal={IEEE Trans. Signal Processing},
  volume={61},
  number={20},
  pages={5075--5089},
  year={2013},
  publisher={IEEE}
}

@article{herbert2017mmse,
  title={{MMSE} adaptive waveform design for active sensing with applications to {MIMO} radar},
  author={Herbert, Steven and Hopgood, James R and Mulgrew, Bernard},
  journal={IEEE Trans. Signal Processing},
  volume={66},
  number={5},
  pages={1361--1373},
  year={2017},
  publisher={IEEE}
}

@article{wang2023distributionally,
  title={Distributionally robust state estimation for jump linear systems},
  author={Wang, Shixiong},
  journal={IEEE Trans. Signal Processing},
  volume={71},
  pages={3835--3851},
  year={2023},
  publisher={IEEE}
}

@article{herbert2018computationally,
  title={Computationally simple {MMSE} ({A}-Optimal) adaptive beam-pattern design for {MIMO} active sensing systems via a linear-{Gaussian} approximation},
  author={Herbert, Steven and Hopgood, James R and Mulgrew, Bernard},
  journal={IEEE Trans. Signal Processing},
  volume={66},
  number={18},
  pages={4935--4945},
  year={2018},
  publisher={IEEE}
}

@inproceedings{li2001survey,
  title={Survey of Maneuvering Target Tracking: {Part III}. Measurement Models},
  author={Li, X Rong and Jilkov, Vesselin P},
  booktitle={Signal and Data Processing of Small Targets 2001},
  volume={4473},
  pages={423--446},
  year={2001},
  organization={SPIE}
}

@article{davey2013track,
  title={Track-Before-Detect Techniques},
  author={Davey, Samuel J and Rutten, Mark G and Gordon, Neil J},
  journal={Integrated Tracking, Classification, and Sensor Management},
  pages={311--362},
  year={2013},
  publisher={Wiley Online Library}
}

@article{davey2007comparison,
  title={A comparison of detection performance for several track-before-detect algorithms},
  author={Davey, Samuel J and Rutten, Mark G and Cheung, Brian},
  journal={EURASIP Journal on Advances in Signal Processing},
  volume={2008},
  pages={1--10},
  year={2007},
  publisher={Springer}
}

@article{zheng2017super,
  title={Super-resolution delay-{Doppler} estimation for {OFDM} passive radar},
  author={Zheng, Le and Wang, Xiaodong},
  journal={IEEE Trans. Signal Processing},
  volume={65},
  number={9},
  pages={2197--2210},
  year={2017},
  publisher={IEEE}
}

@article{zhang2021overview,
  title={An overview of signal processing techniques for joint communication and radar sensing},
  author={Zhang, J Andrew and Liu, Fan and Masouros, Christos and Heath, Robert W and Feng, Zhiyong and Zheng, Le and Petropulu, Athina},
  journal={IEEE J. Sel. Topics Signal Processing},
  volume={15},
  number={6},
  pages={1295--1315},
  year={2021},
  publisher={IEEE}
}

@article{wu2022super,
  title={Super-resolution {TOA} and {AOA} estimation for {OFDM} radar systems based on compressed sensing},
  author={Wu, Min and Hao, Chengpeng},
  journal={IEEE Trans. Aerosp. Electron. Syst.},
  volume={58},
  number={6},
  pages={5730--5740},
  year={2022},
  publisher={IEEE}
}

@article{zhang2024input,
  title={Input Distribution Optimization in {OFDM} Dual-Function Radar-Communication Systems},
  author={Zhang, Yumeng and Aditya, Sundar and Clerckx, Bruno},
  journal={IEEE Trans. Signal Processing},
  pages={5258-5273},
  volume={72},
  year={2024},
  publisher={IEEE}
}

@inproceedings{lellouch2014processing,
  title={Processing alternatives in {OFDM} radar},
  author={Lellouch, Gabriel and Mishra, Amit and Inggs, Mike},
  booktitle={2014 International Radar Conference},
  pages={1--6},
  year={2014},
  organization={IEEE}
}

@article{li2022multi,
  title={Multi-point integrated sensing and communication: Fusion model and functionality selection},
  author={Li, Guoliang and Wang, Shuai and Ye, Kejiang and Wen, Miaowen and Ng, Derrick Wing Kwan and Di Renzo, Marco},
  journal={IEEE Wireless Commun. Lett.},
  volume={11},
  number={12},
  pages={2660--2664},
  year={2022},
  publisher={IEEE}
}

@article{meng2024cooperative,
  title={Cooperative {ISAC} networks: Opportunities and challenges},
  author={Meng, Kaitao and Masouros, Christos and Petropulu, Athina P and Hanzo, Lajos},
  journal={IEEE Wireless Commun.},
  volume={32},
  pages={212-219},
  year={2025},
  publisher={IEEE}
}

@article{koliander2022fusion,
  title={Fusion of probability density functions},
  author={Koliander, G{\"u}nther and El-Laham, Yousef and Djuri{\'c}, Petar M and Hlawatsch, Franz},
  journal={Proc. IEEE},
  volume={110},
  number={4},
  pages={404--453},
  year={2022},
  publisher={IEEE}
}

@article{bissiri2016general,
  title={A general framework for updating belief distributions},
  author={Bissiri, Pier Giovanni and Holmes, Chris C and Walker, Stephen G},
  journal={Journal of the Royal Statistical Society Series B: Statistical Methodology},
  volume={78},
  number={5},
  pages={1103--1130},
  year={2016},
  publisher={Oxford University Press}
}

@article{baek2023generalized,
  title={Generalized {Bayes} approach to inverse problems with model misspecification},
  author={Baek, Youngsoo and Aquino, Wilkins and Mukherjee, Sayan},
  journal={Inverse Problems},
  volume={39},
  number={10},
  pages={105011},
  year={2023},
  publisher={IOP Publishing}
}

@article{davey2013snr,
  title={{SNR} limits on {Kalman} filter detect-then-track},
  author={Davey, Samuel J},
  journal={IEEE Signal Processing Lett.},
  volume={20},
  number={8},
  pages={767--770},
  year={2013},
  publisher={IEEE}
}

@incollection{martin2022direct,
  title={Direct {Gibbs} posterior inference on risk minimizers: Construction, concentration, and calibration},
  author={Martin, Ryan and Syring, Nicholas},
  booktitle={Handbook of Statistics},
  volume={47},
  pages={1--41},
  year={2022},
  publisher={Elsevier}
}

@article{liu2025uncovering,
  title={Uncovering the Iceberg in the Sea: Fundamentals of Pulse Shaping and Modulation Design for Random {ISAC} Signals},
  author={Liu, Fan and Xiong, Yifeng and Lu, Shihang and Li, Shuangyang and Yuan, Weijie and Masouros, Christos and Jin, Shi and Caire, Giuseppe},
  journal={IEEE Trans. Signal Processing},
  pages={2511-2526},
  volume={73},
  year={2025}
}

@article{liu2025cpofdm,
  title={{CP-OFDM} Achieves the Lowest Average Ranging Sidelobe Under {QAM/PSK} Constellations},
  author={Liu, Fan and Zhang, Ying and Xiong, Yifeng and Li, Shuangyang and Yuan, Weijie and Gao, Feifei and Jin, Shi and Caire, Giuseppe},
  journal={IEEE Trans. Inform. Theory},
  pages={6950-6967},
  volume={71},
  issue={9},
  year={2025}
}

@article{berger2010signal,
  title={Signal processing for passive radar using {OFDM} waveforms},
  author={Berger, Christian R and Demissie, Bruno and Heckenbach, J{\"o}rg and Willett, Peter and Zhou, Shengli},
  journal={IEEE J. Sel. Topics Signal Processing},
  volume={4},
  number={1},
  pages={226--238},
  year={2010},
  publisher={IEEE}
}

@article{xu2022experimental,
  title={An experimental proof of concept for integrated sensing and communications waveform design},
  author={Xu, Tongyang and Liu, Fan and Masouros, Christos and Darwazeh, Izzat},
  journal={IEEE Open J. Commun. Soc.},
  volume={3},
  pages={1643--1655},
  year={2022},
  publisher={IEEE}
}

@inproceedings{gu2023beam,
  title={Beam Tracking for Distributed Millimeter-Wave {MIMO-OFDM} Systems with Unscented Particle Filter},
  author={Gu, Fei and Zhu, Pengcheng and Jiang, Yanxiang and Wang, Dongming and Wang, Yan},
  booktitle={2023 IEEE 23rd International Conference on Communication Technology (ICCT)},
  pages={534--539},
  year={2023},
  organization={IEEE}
}

@inproceedings{kwon2019particle,
  title={Particle filter based track-before-detect method in the range-{Doppler} domain},
  author={Kwon, Jihoon and Kwak, Nojun and Yang, Eunjung and Kim, Kwansung},
  booktitle={2019 IEEE Radar Conference (RadarConf)},
  pages={1--5},
  year={2019},
  organization={IEEE}
}

@article{ito2020multi,
  title={A multi-target track-before-detect particle filter using superpositional data in non-{Gaussian} noise},
  author={Ito, Nobutaka and Godsill, Simon},
  journal={IEEE Signal Processing Lett.},
  volume={27},
  pages={1075--1079},
  year={2020},
  publisher={IEEE}
}

@book{simon2006optimal,
  title={Optimal State Estimation: {Kalman}, {H}-Infinity, and Nonlinear Approaches},
  author={Simon, Dan},
  year={2006},
  publisher={John Wiley \& Sons}
}

@article{lu2024random,
  title={Random {ISAC} signals deserve dedicated precoding},
  author={Lu, Shihang and Liu, Fan and Dong, Fuwang and Xiong, Yifeng and Xu, Jie and Liu, Ya-Feng and Jin, Shi},
  journal={IEEE Trans. Signal Processing},
  pages={3453-3469},
  volume={72},
  year={2024},
  publisher={IEEE}
}

@article{xiong2023fundamental,
  title={On the fundamental tradeoff of integrated sensing and communications under {Gaussian} channels},
  author={Xiong, Yifeng and Liu, Fan and Cui, Yuanhao and Yuan, Weijie and Han, Tony Xiao and Caire, Giuseppe},
  journal={IEEE Trans. Inform. Theory},
  volume={69},
  number={9},
  pages={5723--5751},
  year={2023},
  publisher={IEEE}
}

@inproceedings{hu2022low,
  title={Low-{PAPR} {DFRC MIMO-OFDM} waveform design for integrated sensing and communications},
  author={Hu, Xiaoyan and Masouros, Christos and Liu, Fan and Nissel, Ronald},
  booktitle={ICC 2022-IEEE International Conference on Communications},
  pages={1599--1604},
  year={2022},
  organization={IEEE}
}

@article{ahmadipour2022information,
  title={An information-theoretic approach to joint sensing and communication},
  author={Ahmadipour, Mehrasa and Kobayashi, Mari and Wigger, Michele and Caire, Giuseppe},
  journal={IEEE Trans. Inform. Theory},
  volume={70},
  number={2},
  pages={1124--1146},
  year={2022},
  publisher={IEEE}
}

@article{olson2023coverage,
  title={Coverage and rate of joint communication and parameter estimation in wireless networks},
  author={Olson, Nicholas R and Andrews, Jeffrey G and Heath, Robert W},
  journal={IEEE Trans. Inform. Theory},
  volume={70},
  number={1},
  pages={206--243},
  year={2023},
  publisher={IEEE}
}

@article{li2024mimo,
  title={{MIMO-OFDM ISAC} waveform design for range-{Doppler} sidelobe suppression},
  author={Li, Peishi and Li, Ming and Liu, Rang and Liu, Qian and Swindlehurst, A Lee},
  journal={IEEE Trans. Wireless Commun.},
  pages={1001-1015},
  volume={24},
  year={2025},
  publisher={IEEE}
}

@article{liu2021cramer,
  title={{Cram{\'e}r-Rao} bound optimization for joint radar-communication beamforming},
  author={Liu, Fan and Liu, Ya-Feng and Li, Ang and Masouros, Christos and Eldar, Yonina C},
  journal={IEEE Trans. Signal Processing},
  volume={70},
  pages={240--253},
  year={2021},
  publisher={IEEE}
}

@article{xiao2024novel,
  title={A novel joint angle-range-velocity estimation method for {MIMO-OFDM ISAC} systems},
  author={Xiao, Zichao and Liu, Rang and Li, Ming and Liu, Qian and Swindlehurst, A Lee},
  journal={IEEE Trans. Signal Processing},
  year={2024},
  pages={3805-3818},
  volume={72},
  publisher={IEEE}
}

@book{villani2009optimal,
  title={Optimal Transport: Old and New},
  author={Villani, C{\'e}dric},
  year={2009},
  publisher={Springer}
}

@article{du2024reshaping,
  title={Reshaping the {ISAC} tradeoff under {OFDM} signaling: A probabilistic constellation shaping approach},
  author={Du, Zhen and Liu, Fan and Xiong, Yifeng and Han, Tony Xiao and Eldar, Yonina C and Jin, Shi},
  journal={IEEE Trans. Signal Processing},
  volume={72},
  pages={4782-4797},
  year={2024},
  publisher={IEEE}
}

@article{heath2016overview,
  title={An overview of signal processing techniques for millimeter wave {MIMO} systems},
  author={Heath, Robert W and Gonzalez-Prelcic, Nuria and Rangan, Sundeep and Roh, Wonil and Sayeed, Akbar M},
  journal={IEEE J. Sel. Topics Signal Processing},
  volume={10},
  number={3},
  pages={436--453},
  year={2016},
  publisher={IEEE}
}

@inproceedings{godsill2019particle,
  title={Particle filtering: the first 25 years and beyond},
  author={Godsill, Simon},
  booktitle={ICASSP 2019-2019 IEEE International Conference on Acoustics, Speech and Signal Processing (ICASSP)},
  pages={7760--7764},
  year={2019},
  organization={IEEE}
}

@article{vermaak2005monte,
  title={{Monte Carlo} filtering for multi target tracking and data association},
  author={Vermaak, Jaco and Godsill, Simon J and Perez, Patrick},
  journal={IEEE Trans. Aerosp. Electron. Syst.},
  volume={41},
  number={1},
  pages={309--332},
  year={2005},
  publisher={IEEE}
}

@article{schon2005marginalized,
  title={Marginalized particle filters for mixed linear/nonlinear state-space models},
  author={Schon, Thomas and Gustafsson, Fredrik and Nordlund, P-J},
  journal={IEEE Trans. Signal Processing},
  volume={53},
  number={7},
  pages={2279--2289},
  year={2005},
  publisher={IEEE}
}

@article{cappe2007overview,
  title={An overview of existing methods and recent advances in sequential {Monte Carlo}},
  author={Capp{\'e}, Olivier and Godsill, Simon J and Moulines, Eric},
  journal={Proc. IEEE},
  volume={95},
  number={5},
  pages={899--924},
  year={2007},
  publisher={IEEE}
}

\end{document}